\documentclass[10pt,journal]{IEEEtran}

\usepackage{amssymb}
\usepackage{amsmath}
\usepackage{cite}
\usepackage{url}
\usepackage{xcolor}
\usepackage{cite,graphicx,amsmath,amssymb}
\usepackage{subfigure}
\usepackage{fancyhdr}
\usepackage{mdwmath}
\usepackage{mdwtab}
\usepackage{caption}
\usepackage{amsthm}
\usepackage{setspace}
\usepackage{hyperref}
\usepackage{algorithm}
\usepackage{algorithmic}
\usepackage{multirow}
\usepackage{makecell}
\usepackage{mathtools}
\hypersetup{colorlinks=false}

\graphicspath{{./src/img/}}

\newtheorem{theorem}{Theorem}

\newtheorem{lemma}{Lemma}

\newtheorem{corollary}{Corollary}

\captionsetup{font={small}}
\allowdisplaybreaks
\setlength{\textfloatsep}{10pt}

% \title{STARS for Terahertz Communications}

\title{Simultaneously Transmitting and Reflecting Surface (STARS) for Terahertz Communications}

\author{
        Zhaolin~Wang,~\IEEEmembership{Graduate Student Member,~IEEE,}
        Xidong~Mu,~\IEEEmembership{Member,~IEEE,} \\
        Jiaqi~Xu,~\IEEEmembership{Graduate Student Member,~IEEE,}
        and Yuanwei~Liu,~\IEEEmembership{Senior Member,~IEEE}
        
\thanks{The authors are with the School of Electronic Engineering
and Computer Science, Queen Mary University of London, London E1 4NS, U.K. (e-mail: zhaolin.wang@qmul.ac.uk, xidong.mu@qmul.ac.uk, jiaqi.xu@qmul.ac.uk, yuanwei.liu@qmul.ac.uk).}
\vspace{-0.8cm}
}

\begin{document}

\maketitle
\begin{abstract}
    A simultaneously transmitting and reflecting surface (STARS) aided terahertz (THz) communication system is proposed. A novel power consumption model is proposed that depends on the type and resolution of the STARS elements. The spectral efficiency (SE) and energy efficiency (EE) are maximized in both narrowband and wideband THz systems by jointly optimizing the hybrid beamforming at the base station (BS) and the passive beamforming at the STARS. 1) For narrowband systems, independent phase-shift STARSs are investigated first. The resulting complex joint optimization problem is decoupled into a series of subproblems using penalty dual decomposition. Low-complexity element-wise algorithms are proposed to optimize the analog beamforming at the BS and the passive beamforming at the STARS. The proposed algorithm is then extended to the case of coupled phase-shift STARS. 2) For wideband systems, the spatial wideband effect at the BS and STARS leads to significant performance degradation due to the beam split issue. To address this, true time delayers (TTDs) are introduced into the conventional hybrid beamforming structure for facilitating wideband beamforming. 
    An iterative algorithm based on the quasi-Newton method is proposed to design the coefficients of the TTDs. Finally, our numerical results confirm the superiority of the STARS over the conventional reconfigurable intelligent surface (RIS). It is also revealed that i) there is only a slight performance loss in terms of SE and EE caused by coupled phase shifts of the STARS in both narrowband and wideband systems, and ii) the conventional hybrid beamforming achieves comparable SE performance and much higher EE performance compared with the full-digital beamforming in narrowband systems but not in wideband systems, where the TTD-based hybrid beamforming is required for mitigating wideband beam split.
\end{abstract}

\begin{IEEEkeywords}
    Beamforming design, simultaneously transmitting and reflecting surface, terahertz communications, wideband beam split.
\end{IEEEkeywords}
\section{Introduction}
The sixth generation (6G) wireless communication systems are anticipated to support a minimum peak data rate of one terabit per second (Tbps) to enable the development of novel applications, including virtual reality, vehicle-to-everything, Internet of Things, and Metaverse \cite{zhang20196g}. In this context, communication over the terahertz (THz) band, which is situated in the frequency range of 0.1-10 THz, is considered a promising technique for 6G as it provides a broad communication bandwidth in the order of tens of gigahertz (GHz) \cite{akyildiz2014terahertz, shafie2022terahertz, akyildiz2022terahertz}. However, THz signals suffer significant propagation pass loss due to their very high frequencies, which limits communication distance. Therefore, the massive multiple-input multiple-output (MIMO) technique is a crucial enabler for THz communication \cite{akyildiz2022terahertz}. Specifically, the ultra-massive antenna array can be implemented thanks to the extremely small wavelength of THz signals, allowing for the generation of fine beams to compensate for significant pass loss in the THz band.

Simultaneous transmitting and reflecting surface (STARS) is another promising technique for 6G. Unlike the conventional reconfigurable intelligent surface (RIS) that can only reflect the incident signal and thus lead to the \emph{half-space} coverage \cite{huang2019reconfigurable}, the STARS can simultaneously transmit and reflect the incident signal into both sides of the surface, resulting in the \emph{full-space} coverage \cite{liu2021star, mu2021simultaneously}. Therefore, STARS provides more degrees of freedom for manipulating the signal propagation and thereby enhances the design flexibility of the wireless network.

\subsection{Prior Works}
As previously discussed, the massive MIMO technique is crucial for THz communications, enabling accurate beamforming to combat significant pathloss. However, in contrast to sub-6 GHz communications, the full-digital beamforming architecture, where each antenna has a dedicated radio frequency (RF) chain, becomes infeasible in high-frequency bands such as millimeter-wave (mmWave) and THz, due to its high cost and power consumption \cite{han2021hybrid}. Consequently, researchers have extensively studied a hybrid analog and digital beamforming architecture, which comprises only a few RF chains for digital beamforming and numerous low-cost phase shifters (PSs) for analog beamforming \cite{yu2016alternating, sohrabi2016hybrid, ni2017near, zhu2017hybrid, shi2018spectral}. In particular, the authors of \cite{yu2016alternating} proposed a series of alternating minimization algorithms for hybrid beamforming design, aiming to minimize the matching error between hybrid beamforming and optimal full-digital beamforming. The authors of \cite{sohrabi2016hybrid} analyzed the minimum number of RF chains required for hybrid beamforming to achieve performance comparable to that of full-digital beamforming. Furthermore, the matrix decomposition was employed for hybrid beamforming design in \cite{ni2017near} and \cite{zhu2017hybrid}, leading to relatively low complexity. As a further advance, a penalty-based algorithm with provable optimality was developed in \cite{shi2018spectral} to investigate the performance limit of hybrid beamforming.

Nevertheless, the above hybrid beamforming designs have mainly focused on narrowband systems and may not meet the requirements of THz communications with large spectrum resources. As a result, research efforts have been directed towards investigating the hybrid beamforming design in THz wideband systems \cite{yuan2020hybrid, yuan2022cluster, gao2021wideband, dai2022delay}. For instance, a two-stage wideband hybrid beamforming design was proposed in \cite{yuan2020hybrid} for multi-carrier systems over frequency selective fading channels. To overcome the connectivity limitation posed by the number of RF chains, the authors of \cite{yuan2022cluster} introduced a cluster-based hybrid beamforming design to serve multiple users in the same beam with different subcarriers. However, neither of the approaches considered the significant beam split effect that occurs in wideband THz communications. To address this challenge, two hybrid beamforming designs based on virtual sub-array and true time delayers (TTDs) were presented in \cite{gao2021wideband}. Furthermore, the authors of \cite{dai2022delay} proposed a delay-phase beamforming design, which also employs TTDs to eliminate the beam split effect.

In addition, the high susceptibility of THz communications to obscuration may lead to transmission unreliability caused by blockages. One possible solution to address this issue is through the use of RISs, which can establish an additional line-of-sight path, thus enhancing the performance of THz communications \cite{chen2021intelligent}. For instance, the authors of \cite{ning2021terahertz} proposed a cooperative beam training scheme for RIS-aided THz communications and designed the hybrid beamforming based on the training results. To address the effects of imperfect channel state information, a robust hybrid design was developed for RIS-aided THz communications in \cite{hao2021robust}. Furthermore, the authors of \cite{wan2021terahertz} propose to exploit holographic RISs in THz communications to enhance the array gain. More recently, the beam split effect in RIS-aided THz communications was investigated in \cite{yan2022beamforming} and \cite{su2023wideband}.

\begin{figure}[t!]
    \centering
    \includegraphics[width=0.4\textwidth]{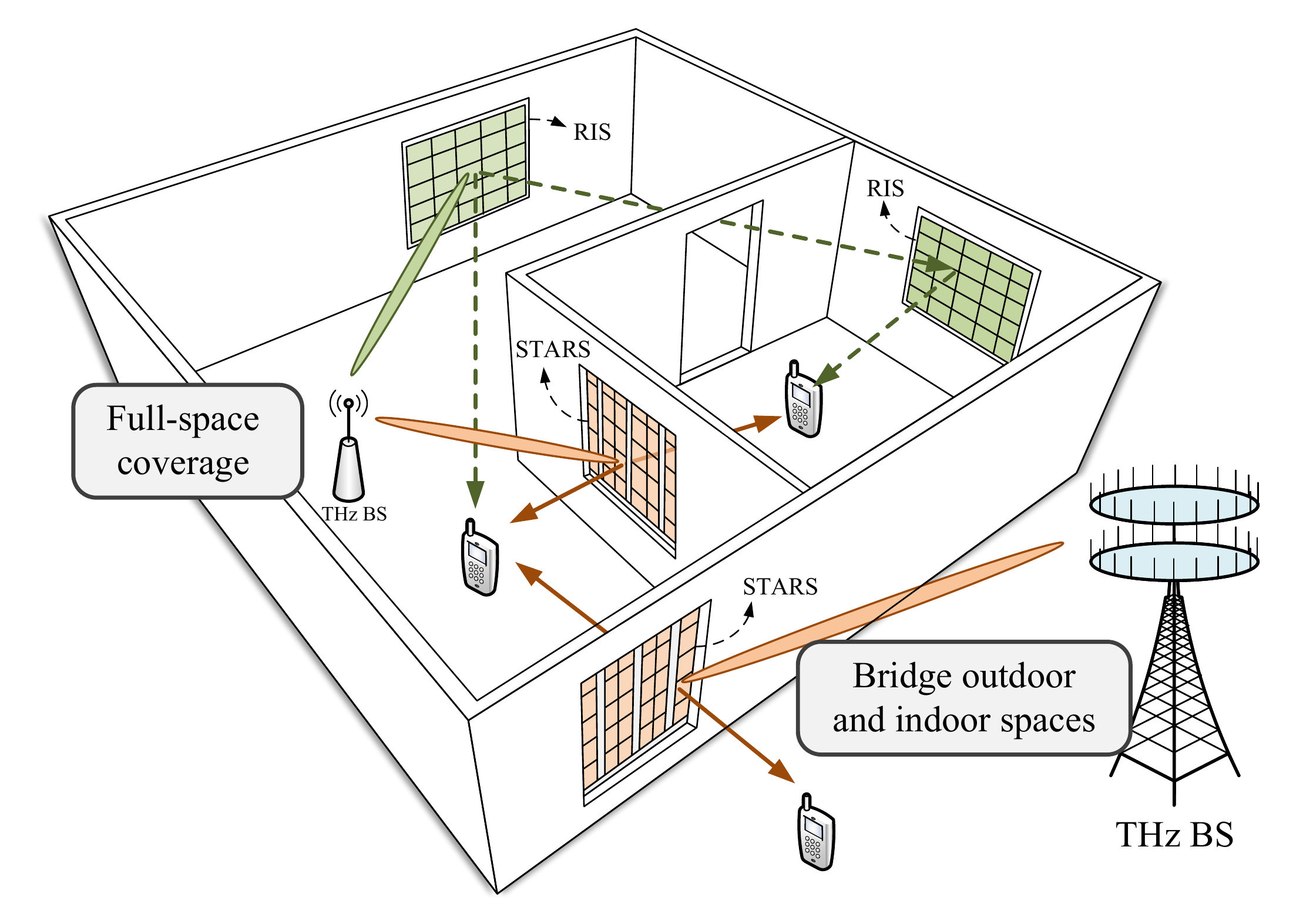}
    \caption{Applications of STARS in THz communications.}
    \label{fig:system_model}
\end{figure}

\subsection{Motivations and Contributions}
Although RISs have the potential to solve the blockage problem in THz communications, their hardware limitations restrict their flexibility and effectiveness. Specifically, conventional RISs can only reflect the incident signals, thus requiring the transmitter and receiver to be located on the same side of RISs. As a result, multiple RISs may be needed to overcome blockages and realize full-space coverage. For instance, in indoor THz communications, as depicted in Fig. \ref{fig:system_model}, two conventional RISs are needed to cover the entire indoor space. STARSs are promising to overcome the aforementioned limitations and provide more benefits. On the one hand, STARSs are more efficient in achieving full-space coverage. As shown in Fig. \ref{fig:system_model}, a single STARS can aid the THz base station (BS) in covering the entire indoor space. On the other hand, STARSs can bridge disconnected spaces, such as indoor and outdoor spaces, which is impossible for conventional RISs. Therefore, there is a natural link between STARSs and THz communications, which motivates further exploration of STARSs in THz communications.

Based on different spectrum allocation schemes, such as multi-band-based and multi-transmission-window-based schemes, the system bandwidth of THz communications can be either reasonably small or extremely large \cite{shafie2022terahertz}. Furthermore, based on different hardware implementations, the transmission and reflection phase shifts of STARSs can be either independent or coupled \cite{xu2022star_couple}. Therefore, to explore the full potential of STARSs in THz communications, we investigate the performance of both independent and coupled phase-shift STARSs in both narrowband and wideband THz communication systems. The main contributions of this paper can be summarized as follows:

\begin{itemize}
    \item We propose a novel STARS-aided THz communication system and evaluated its performance in terms of spectral efficiency (SE) and energy efficiency (EE). We for the first time develop a power consumption model for STARS, which takes into account the types and resolutions of the STARS elements. Furthermore, we analyze the THz channels in both narrowband and wideband systems. Based on the different characteristics of narrowband and wideband THz channels, we investigate the joint design of hybrid beamforming at the BS and the passive beamforming at the STARS for both systems.
    
    \item For narrowband systems, we formulate a general optimization problem that aims to maximize SE and EE. Specifically, for the independent phase-shift STARS, we propose a double-loop iterative algorithm using penalty dual decomposition (PDD) \cite{shi2020penalty} to solve the optimization problem. In particular, we develop element-wise algorithms that admit optimizing the analog beamforming at the BS and the passive beamforming at the STARS with low computational complexity. Finally, we extend the proposed algorithm to cover scenarios with the coupled phase-shift STARS.
    
    \item For wideband systems, we focus on alleviating the impact of beam split caused by the spatial wideband effect at the BS and STARS. We introduce TTDs into the conventional hybrid beamforming structure for facilitating wideband beamforming. Then, we propose an iterative algorithm based on the quasi-Newton method to optimize the coefficients of TTDs.

    \item Our numerical results unveil that 1) the coupled phase-shift STARS only leads to a slight performance degradation compared to the independent one, in both narrowband and wideband systems; 2) In narrowband systems, the performance of hybrid beamforming is comparable to the full-digital beamforming; and 3) in wideband systems, TTD-based hybrid beamforming achieves similar performance to the full-digital beamforming, while conventional hybrid beamforming causes significant performance degradation especially when the size of STARS is large.
\end{itemize}

\subsection{Organization and Notations}
The remainder of this paper is organized as follows. Section \ref{sec:system_model} presents the system model of the proposed STARS-aided THz communication system. Then, Sections \ref{sec:narrow} and \ref{sec:wide} investigate the joint designs of hybrid beamforming at the BS and passive beamforming at the for narrowband and wideband systems, respectively. Section \ref{sec:result} provides numerical results to verify the effectiveness of the proposed designs. Finally, Section \ref{sec:conclusion} concludes the paper.

\emph{Notations:}
Scalars, vectors, and matrices are denoted by the lower-case, bold-face lower-case, and bold-face upper-case letters, respectively; 
$\mathbb{C}^{N \times M}$ denotes the space of $N \times M$ complex matrices;
$a^*$ and $|a|$ denote the conjugate and the magnitude of scalar $a$, respectively;  
$(\cdot)^T$, $(\cdot)^H$, $\|\cdot\|$, $\|\cdot\|_F$, and $\mathrm{tr}(\cdot)$ denote the transpose, conjugate transpose, norm, Frobenius norm, and trace respectively; 
$\mathrm{blkdiag}(\mathbf{A})$ denotes a block diagonal matrix in which the diagonal blocks are the columns of $\mathbf{A}$;
$\mathbf{I}_N$ denotes the $N \times N$ identity matrix;   
$[\mathbf{A}]_{i,j}$ denotes the entry of the matrix $\mathbf{A}$ at the $i$-th row and $j$-th column;  
$[\mathbf{a}]_{i:j}$ denotes the vector composed of the $i$-th to the $j$-th entries of the vector $\mathbf{a}$;  
$\mathbb{E}[\cdot]$ denotes the statistical expectation; 
$\mathrm{Re}\{\cdot\}$ denotes the real component of a complex number;
$\mathcal{CN}(\mu, \sigma^2)$ denotes the distribution of a circularly symmetric complex Gaussian random variable with mean $\mu$ and variance $\sigma^2$;
$\mathcal{U}(a,b)$ denotes the uniform distribution between $a$ and $b$;
$\lceil \cdot \rceil$ denotes the ceiling function.

\section{System Model} \label{sec:system_model}

We consider a STARS-aided THz communication system, which consists of an $N$-antenna BS with a uniform linear array (ULC), an $M$-element STARS with a uniform planar array (UPA), and $K$ single-antenna users whose indices are collected in $\mathcal{K}$.  Without loss of generality, we assume that the users in subset $\mathcal{K}_t = \{1,\dots,K_0\}$  are located on the transmission side, and the users in subset $\mathcal{K}_r = \{K_0+1,\dots,K\}$ are located on the reflection side. Due to the high susceptibility of THz communications to obscuration, the direct links between the BS and users are assumed to be blocked. Furthermore, it is assumed that the channels and path angles have been acquired through the angle-based channel estimation method \cite{liu2021cascaded}.

\subsection{Signal Model for STARS}
In this work, we consider patch-array-based STARSs \cite{xu_vtmag}. The STARS elements excited by the incident signal are capable of radiating signals into both transmission and reflection spaces, which are referred to as transmitted signals and reflected signals, respectively. Let $s_m \in \mathbb{C}$ denote the incident signal at the $m$-th element. Then, the corresponding transmitted signal $t_n \in \mathbb{C}$ and reflected signal $r_n \in \mathbb{C}$ is given by \cite{xu2021star}
\begin{equation}
  t_m = \beta_{t,m} e^{j \phi_{t,m}} s_m, \quad r_m = \beta_{r,m} e^{j \phi_{r,m}} s_m,
\end{equation}
where $\beta_{t,m}, \beta_{r,m} \in [0,1]$ are the amplitude coefficients for transmission and reflection and $\phi_{t,m}, \phi_{r,m} \in [0, 2\pi]$ are the corresponding phase shifts introduced by the $m$-th elements. In this paper, we consider both independent and coupled phase-shift models for the STARS. In particular, for independent phase-shift STARSs, the law of energy conservation needs to be satisfied, which is given by 
\begin{equation} \label{eqn:ideal_STAR}
  \beta_{t,m}^2 + \beta_{r,m}^2 = 1, \forall m \in \mathcal{M},
\end{equation}
where $\mathcal{M} = \{1,\dots,M\}$. Moreover, the phase shifts for transmission and reflection can be independently adjusted. However, the active or lossy elements are required to achieve independent control of phase shifts, which can significantly increase manufacturing costs. For the low-cost passive lossless STARS, the electric and magnetic impedances of each element should be purely imaginary. Under such conditions, the transmission and reflection phase shifts of STARSs are coupled, leading to the following constraints \cite{xu2022star_couple}:
\begin{align} \label{eqn:coupled_STAR} 
  \cos(\phi_{t,m} - \phi_{r,m}) = 0, \forall m \in \mathcal{M}.
\end{align}
The above constraint implies that if $\phi_{t,m}$ is fixed, $\phi_{r,m}$ can only be selected from a finite set such that $\phi_{t,m} - \phi_{r,m}$ is $\pi/2$ or $3\pi/2$, and vise versa.

\begin{figure}[t!]
  \centering
  \includegraphics[width=0.4\textwidth]{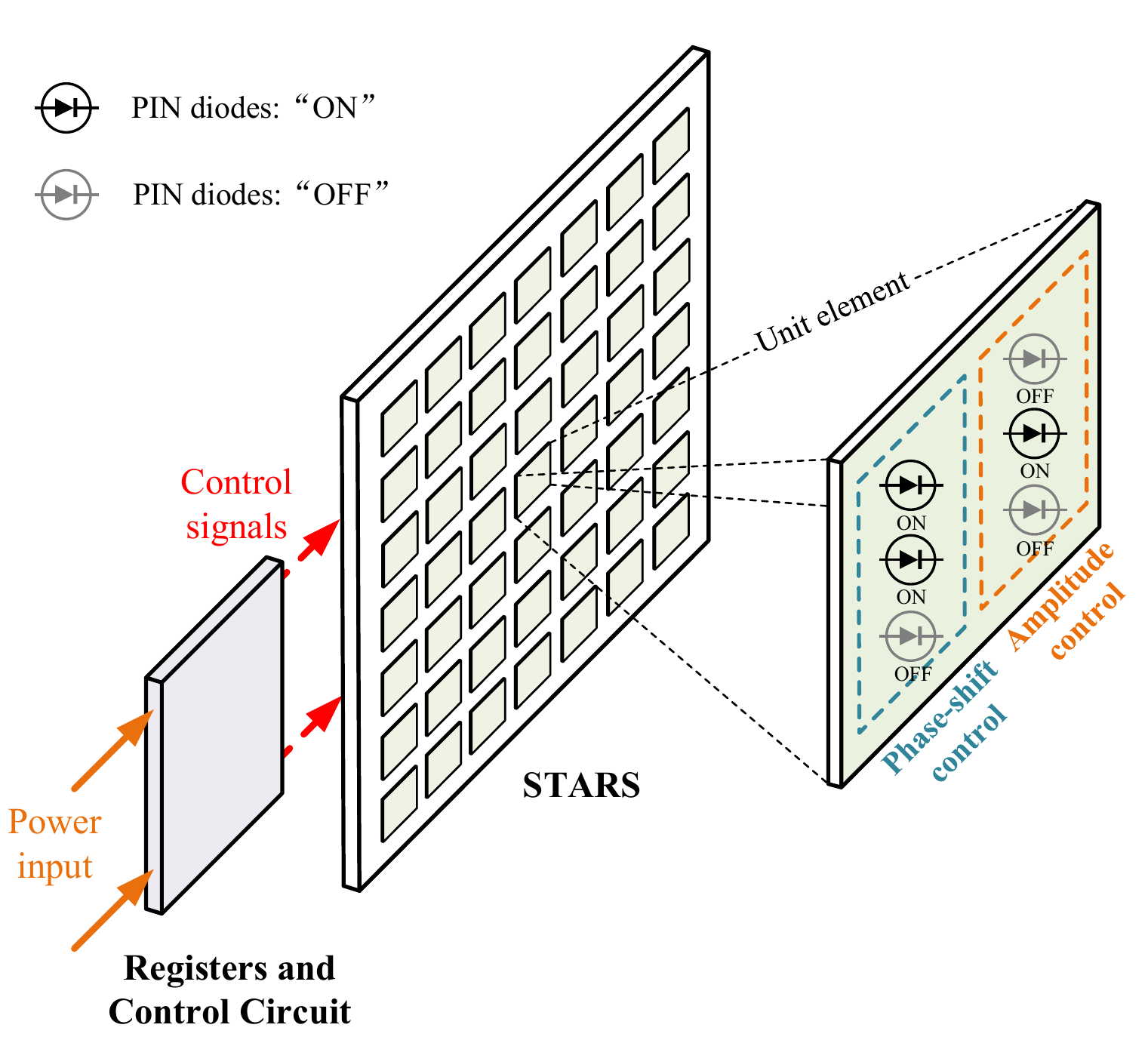}
  \caption{Hardware structure of the STARS}
  \label{fig:pwoer_model}
\end{figure}

\subsection{Proposed Power Consumption Model for STARSs} \label{sec:STAR_power}
For patch-array-based STARSs, each element can accommodate positive-intrinsic-negative (PIN) diodes to configure the different states, as illustrated in Fig. \ref{fig:pwoer_model}. The power consumption of STARSs is composed of two parts, namely the static power consumption of the control circuit (e.g., field-programmable gate array (FPGA) board connected to the PIN diodes) and the dynamic power consumption of each element \cite{9206044}. In principle, the static power consumption is independent of the operating states of the STARS elements. On the contrary, the dynamic power consumption depends on the different states of the element as well as the number of quantization levels of each element. In the following, we present the power consumption models for both independent phase-shift and coupled phase-shift STARSs.

\subsubsection{Independent Phase-shift STARSs}
Independent phase-shift STARSs are only subject to the energy conservation constraint as given in \eqref{eqn:ideal_STAR}. Thus, as shown in Fig. \ref{fig:pwoer_model}, we need three sets of PIN diodes to control the power splitting ratio ($\beta_{t,m}/\beta_{r,m}$), the transmission phase-shift ($\phi_{t,m}$), and the reflection phase-shift ($\phi_{r,m}$), respectively. 
If the number of quantization levels assigned for the power splitting ratio and phase shifts are $L_\beta$ and $L_\phi$, respectively, then, the total number of PIN diodes required is $\lceil\log_2L_\beta + 2\log_2L_\phi\rceil$. 
As a result, we formulate the power consumption of the independent phase-shift STARS as follows:
\begin{equation} \label{eqn:STAR_power}
    P_{\mathrm{STAR}}^i = \frac{1}{2}\lceil\log_2L_\beta + 2\log_2L_\phi\rceil M P_{\mathrm{PIN}} + P_{\mathrm{circ}},
\end{equation}
where $\frac{1}{2}\lceil\log_2L_\beta + 2\log_2L_\phi\rceil$ is the average number of PIN diodes in the ``ON'' state per element, $P_{\mathrm{PIN}}$ denotes the power consumption per PIN diode, and $P_{\mathrm{circ}}$ denotes the static power consumption of the control circuit.

\subsubsection{Coupled Phase-shift STARSs}
Coupled phase-shift STARSs are subject to the energy conservation constraint and the coupled phase-shift constraint in \eqref{eqn:coupled_STAR}. Thus, for each element, we need two sets of PIN diodes to control the power splitting and phase-shift for $\phi_{t,m}$, as well as an auxiliary PIN diode to determine whether $\phi_{t,m}-\phi_{r,m}$ is $\pi/2$ or $3\pi/2$ \cite{xu2022star_couple}. This is to say that for coupled phase-shift STARSs, the total number of PIN diodes required is $\lceil\log_2L_\beta + \log_2L_\phi+1\rceil$.
As a result, we formulate the power consumption of coupled phase-shift STARSs STARSs as follows:
\begin{equation} \label{eqn:STAR_power_coupled}
    P_{\mathrm{STAR}}^c = \frac{1}{2}\lceil\log_2L_\beta + \log_2L_\phi+1\rceil M P_{\text{PIN}} + P_{\text{circ}},
\end{equation}
Theoretically, to achieve the ideal case of continuous phase shifts, an infinite number of states are required. However, in practical cases, continuous phase shifts are obtained with tolerable phase-shift errors. For example, we consider the maximum tolerable error to be $1^\circ$, then the corresponding ``continuous phase-shift'' can be achieved with $\lceil\log_2{180} \rceil = 8$ PIN diodes per element. A similar result holds for the control bits of power splitting amplitudes.
\subsection{THz Channel Model} \label{sec:channel_model}
We adopt a ray-tracing-based channel model to incorporate the limited scattering characteristics of THz channels in both narrowband and wideband systems. Consider a THz channel with $L$ paths between the BS and STARS and $L_k$ paths between the STARS and user $k$. For the channel related to the $i$-th path between the BS and STARS and the $j$-th path between the STARS and user $k$, let $\alpha_{i,j,k}$ denote as the complex path gain, $\tau_{i,j,k}$ denote the path delay, $\psi_{j,k}^t$, $\vartheta_i^t$,  $\psi_i^r$, and $\vartheta_{j,k}^r$ denote the azimuth angle of departure, elevation angle of departure, azimuth angle of arrival, and elevation angle of arrival at the STARS, respectively, and $\varphi_i$ denote the angle of departure at the BS. Let $p(t)$ denote the pulse shaping function, $W$ denote the system bandwidth, and $f$ denote the carrier frequency. Then, the corresponding delay-$q$ baseband channel $\mathbf{h}_{i,j,k}[q] \in \mathbb{C}^{1 \times N}, \forall k \in \mathcal{K}_{\chi},$ is given by \cite{tse2005fundamentals, priebe2013stochastic, rappaport2019wireless}
\begin{align} \label{eqn:channel_path}
  \mathbf{h}_{i,j,k}&[q] = \alpha_{i,j,k} p(q - W \tau_{i,j,k}) \nonumber \\
  &\times \mathbf{a}^H (f, \psi_{j,k}^t, \vartheta_{j,k}^t) \mathbf{\Theta}_{\chi} \mathbf{a}(f, \psi_i^r, \vartheta_i^r) \mathbf{b}^H(f, \varphi_i),
\end{align}  
where $\mathbf{\Theta}_{\chi} = \mathrm{diag}([\beta_{\chi,1} e^{j \varphi_{\chi,1}},\dots,\beta_{\chi,M} e^{j \varphi_{\chi,M}}]^T), \forall \chi \in \{t,r\},$ denotes the transmission/reflection coefficient matrix of the STARS, $\mathbf{a}(f, \psi, \vartheta)$ denotes the array response vector of the UPA at the STARS, and $\mathbf{b}(f, \varphi)$ denotes the array response vector of the ULA at the BS. Specifically, assuming the dimension of the UPA at the STARS to be $M_h \times M_v$, the array response vector can be modeled as 
\begin{align}
  &\mathbf{a} (f, \psi, \vartheta) \nonumber \\
  & = [1, e^{-j \frac{2\pi f}{c} d \sin \psi \sin \vartheta  } , \dots, e^{-j \frac{2 \pi f}{c} (M_h-1) d \sin \psi \sin \vartheta }]^T \nonumber \\
  & \qquad \quad \otimes [1, e^{-j \frac{2 \pi f}{c} d \cos \vartheta }, \dots, e^{-j \frac{2 \pi f}{c} (M_v-1) d \cos \vartheta }]^T,
\end{align}
where $c$ and $d$ denote the speed of light and antenna spacing, respectively. Similarly, the array response vector of the ULA at the BS is given by    
\begin{equation} \label{eqn:array_response_BS}
  \mathbf{b} (f, \varphi) = [1, e^{-j \frac{2 \pi f}{c} d \sin \varphi }, \dots, e^{-j \frac{2 \pi f}{c} (N-1) d \sin \varphi }]^T.
\end{equation}
Furthermore, the overall complex path gain can be rewritten as $\alpha_{i,j,k} = \bar{\alpha}_i \tilde{\alpha}_{j,k} \sqrt{G_r G_t}$. More specifically, $\bar{\alpha}_i$ and $\tilde{\alpha}_{j,k}$ denote the complex path gains between the BS and the STARS and between the STARS and user $k$, respectively. $G_t$ and $G_r$ denote the transmit and receive antenna gain, respectively. The amplitude of $\bar{\alpha}_i$ and $\tilde{\alpha}_{j,k}$ relies on the path loss. According to \cite{jornet2011channel}, the path loss $L(f, D)$ in the THz band involves spreading loss and absorption loss, which is given by 
\begin{align} \label{eqn:pathloss}
  L(f, D) [\mathrm{dB}] = &L_{\mathrm{spread}}(f, D) [\mathrm{dB}] + L_{\mathrm{absorption}}(f, D) [\mathrm{dB}] \nonumber \\
  = & 20 \log_{10} \left( \frac{4 \pi f D}{c} \right) + k(f) D 10\log_{10} e.
\end{align}
Here, $D$ denotes the path length and $k(f)$ is the frequency-dependent medium absorption coefficient.

Given the delay-$q$ channel $\mathbf{h}_{i,j,k}[q]$ of each path, the overall channel for user $k$ at time $n$ is given by    
\begin{equation}
  \mathbf{h}_k[n] = \sum_{q=0}^{Q_k-1} \sum_{i =1}^L \sum_{j = 1}^{L_k} \mathbf{h}_{i,j,k}[q] \delta[n-q],
\end{equation}
where $Q_k$ denotes the maximum resolvable delays at user $k$ and $\delta[\cdot]$ denotes the unit impulse function. The value of $Q_k$ is determined by the relationship between the system bandwidth and the coherence bandwidth $W_{corr,k} = 1/[(\max_{i,j} \tau_{i,j,k}) - (\min_{i,j} \tau_{i,j,k})]$. Therefore, in the following, we further analyze the overall channel $\mathbf{h}_k[n]$ in narrowband and wideband systems, respectively

\subsubsection{Narrowband System}
In narrowband systems, the system bandwidth is assumed to be much smaller than the coherence bandwidth, i.e., $W \ll W_{corr,k}$ \cite{tse2005fundamentals}. For the purpose of exposition, we set the time of the first arrival path as the reference time, i.e., $\min_{i,j} \tau_{i,j,k} = 0$. Hence, the coherence bandwidth can be simplified as $W_{corr,k} = 1/\max_{i,j} \tau_{i,j,k}$. Based on the narrowband condition $W \ll W_{corr,k}$, it holds that $W \tau_{i,j,k} \approx 0, \forall m,n$, resulting in $p(q - W \tau_{i,j,k}) \approx p(q)$. Generally, the pulse-shaping function can be the optimal sinc function or the practical raise-cosine function, which has the property of rapid decay, i.e., $p(q) = 1$ when $q = 0$ and $p(q) \approx 0$ when $q \ge 1$. Therefore, the value of delay-$q$ channel $\mathbf{h}_{m,n,q}$ is almost zeros when $q \ge 1$. Therefore, the overall channel for user $k, \forall k \in \mathcal{K}_{\chi},$ can be simplified into the following form:    
\begin{align}
  \mathbf{h}^{n}_{k}[n] = &\sum_{i=1}^L \sum_{j=1}^{L_k} \mathbf{h}_{i,j,k} [0] \delta[n] =  \mathbf{v}_k^n \mathbf{\Theta}_{\chi} \mathbf{G}^n \delta[n],
\end{align} 
where 
\begin{align}
  \mathbf{G}^n = &\sum_{i=1}^L \sqrt{G_t} \bar{\alpha}_i \mathbf{a}(f, \psi_i^r, \vartheta_i^r) \mathbf{b}^H(f, \varphi_i), \\
  \mathbf{v}_k^n = &\sum_{j=1}^{L_k} \sqrt{G_r} \tilde{\alpha}_{j,k} \mathbf{a}^H (f, \psi_{j,k}^t, \vartheta_{j,k}^t).
\end{align}

\subsubsection{Wideband System}
In wideband systems, the system bandwidth is assumed to be comparable to or much larger than the coherence bandwidth. In this case, the condition $W \tau_{i,j,k} \approx 0$ no longer holds. Therefore, unlike narrowband systems where a single delay channel is sufficient to represent the overall channel, multiple delay channels are required for wideband systems, i.e., $Q_k\ge 1$. Such a channel is referred to as a frequency-selective channel, which causes inter-symbol interference since each user $k$ receives overlapped transmit signals with different delays at each time index $n$. The multi-carrier orthogonal frequency-division multiplexing (OFDM) technique is typically exploited to address this issue, where the signal is transformed into the frequency domain using discrete Fourier transform (DFT). Let $M_c$ denote the number of subcarriers in the OFDM system and $f_c$ denote the central frequency. The frequency for subcarrier $m$ is thus given by $f_m = f_c + \frac{W(2m-1-M_c)}{2M_c}$. Then, the overall channel for user $k, \forall k \in \mathcal{K}_{\chi},$ at subcarrier $m$ obtained by DFT is given by 
\begin{align} \label{eqn:overall_channel}
  \mathbf{h}_{m,k}^w = & \sum_{q=0}^{Q_k-1} \sum_{i =1}^L \sum_{j = 1}^{L_k} \mathbf{h}_{i,j,k}[f_m, q] e^{\frac{-j 2 \pi m q}{M_c}} = \mathbf{v}_{m,k}^w \mathbf{\Theta}_{\chi} \mathbf{G}_{m}^w,
\end{align}  
where 
\begin{align}
  \mathbf{G}_m^w = & \sum_{i=1}^L \bar{\alpha}_i^w \mathbf{a}(f_m, \psi_i^r, \vartheta_i^r) \mathbf{b}^H(f_m, \varphi_i), \\
  \mathbf{v}_{m,k}^w = & \sum_{j=1}^{L_k} \tilde{\alpha}_{j,k}^w \mathbf{a}^H (f_m, \psi_{j,k}^t, \vartheta_{j,k}^t), \\
  \bar{\alpha}_i^w \tilde{\alpha}_{j,k}^w = & \sum_{q=0}^{Q_k-1} \alpha_{i,j,k} p(q - W \tau_{i,j,k}) e^{\frac{-j 2 \pi m q}{M_c}}.
\end{align}
Note that in the above expression, we rephrase $\mathbf{h}_{i,j,k} [q]$ for subcarrier $m$ as $\mathbf{h}_{i,j,k}[f_m,q]$, since it is also a function of carrier frequency. To further explain this, we take the array response vector of the ULA at the BS as an example. When manufacturing the antenna arrays, the antenna spacing is usually set as half of the wavelength at the central frequency, i.e., $d = \frac{c}{2 f_c}$. According to \eqref{eqn:array_response_BS}, the array response vector of the ULA at the BS at subcarrier $m$ is given by  
\begin{equation} 
    \mathbf{b} (f_m, \varphi) = [1, e^{-j \pi \frac{f_m}{f_c} \sin \varphi }, \dots, e^{-j \pi (N-1) \frac{f_m}{f_c} \sin \varphi }]^T.
\end{equation}
Therefore, for \emph{frequency-independent} analog beamforming at the BS and the passive beamforming at the STARS, the \emph{frequency-dependent} array response vector can lead to \emph{beam split} effect, which will be detailed in Section \ref{sec:wide}.

\section{Narrowband System} \label{sec:narrow}
In this section, we investigate the narrowband STARS-aided THz communication system with the hybrid beamforming structure with the aim of maximizing its SE and EE. We first propose a PDD-based algorithm for solving the SE and EE maximization problem in the case of independent phase-shift STARSs, which is then extended to the case of coupled phase-shift STARSs.

\subsection{Hybrid Beamforming}
According to \eqref{eqn:pathloss}, the path loss in the THz band can be very large due to very high frequencies. To compensate for the severe path loss, we exploit a massive antenna array with a hybrid beamforming structure, as shown in Fig. \ref{fig:narrow_hybrid}, at the BS to achieve a large array gain. In the hybrid beamforming structure, we assume that there are $N_{\mathrm{RF}}$ RF chains ($N_{\mathrm{RF}} \ll N$). Each RF chain is connected to the $N$ antenna via $N$ phase shifters (PSs). Therefore, there are total $N_{\mathrm{RF}} N$ PSs. Let $\mathbf{F}_{\mathrm{RF}} \in \mathbb{C}^{N \times N_{\mathrm{RF}}}$ denote the analog beamformer achieved by PSs, $\mathbf{F}_{\mathrm{BB}} = [\mathbf{f}_{\mathrm{BB},1}, \dots, \mathbf{f}_{\mathrm{BB},K}] \in \mathbb{C}^{N_{\mathrm{RF}} \times K }$ denote the digital beamformers for $K$ users, and $\mathbf{s}[n] = [s_1[n],\dots,s_K[n]]^T \in \mathbb{C}^{K \times 1}$ denote the information symbols for $K$ users. The transmit signal at the BS is given by 
\begin{equation}
  \mathbf{x}[n] = \mathbf{F}_{\mathrm{RF}} \mathbf{F}_{\mathrm{BB}} \mathbf{s}[n] = \mathbf{F}_{\mathrm{RF}} \sum_{k \in \mathcal{K}} \mathbf{f}_{\mathrm{BB},k} s_k[n].
\end{equation}
Since the PSs can only change phase shifts of signals, each entry of the analog beamformer matrix $\mathbf{F}_{\mathrm{RF}}$ needs to satisfy the following unit-modulus constraint:
\begin{equation}
  \left| [\mathbf{F}_{\mathrm{RF}}]_{i,j} \right| = 1, \forall i, j.
\end{equation}
It is assumed that $\mathbf{s}[n]$ is the independent complex Gaussian signal, i.e., $\mathbb{E}[ \mathbf{s}[n] (\mathbf{s}[n])^H ] = \mathbf{I}_{K}$. Therefore, the covariance matrix of the transmit signal is given by $\mathbf{Q} = \mathbb{E}[ \mathbf{x}[n] (\mathbf{x}[n])^H ] = \mathbf{F}_{\mathrm{RF}} \mathbf{F}_{\mathrm{BB}} \mathbf{F}_{\mathrm{BB}}^H \mathbf{F}_{\mathrm{RF}}^H$. In this paper, we consider the average power constraint, which is given by 
\begin{equation}
  \mathrm{tr}(\mathbf{Q}) = \|\mathbf{F}_{\mathrm{RF}} \mathbf{F}_{\mathrm{BB}}\|_F^2 \le P_t.
\end{equation}
Then, the received signal at user $k, \forall k \in \mathcal{K}_{\chi}, \chi \in \{t,r\},$ is given by 
\begin{align} \label{eqn:narrow_receive_signal}
  y_k[n] = &\mathbf{h}_k^n[n] * \mathbf{x}[n] + n_k[n] \nonumber \\
  = &\mathbf{v}_k^n \mathbf{\Theta}_{\chi} \mathbf{G}^n \mathbf{F}_{\mathrm{RF}} \sum_{k \in \mathcal{K}} \mathbf{f}_{\mathrm{BB},k} s_k[n] + n_k[n] \nonumber \\
  = & \underbrace{\boldsymbol{\theta}_{\chi}^T \mathbf{H}_k^n \mathbf{F}_{\mathrm{RF}} \mathbf{f}_{\mathrm{BB},k} s_k[n]}_{\text{desired signal}} \nonumber \\
  &+  \underbrace{\sum_{i \in \mathcal{K}, i \neq k} \boldsymbol{\theta}_{\chi}^T \mathbf{H}_k^n \mathbf{F}_{\mathrm{RF}} \mathbf{f}_{\mathrm{BB},i} s_i[n]}_{\text{inter-user interference}} + n_k[n],
\end{align}
where $\boldsymbol{\theta}_{\chi} = [\beta_{\chi,1} e^{j \phi_{\chi,1}},\dots,\beta_{\chi,M}^{j \phi_{\chi,M}}]^T, \forall \chi \in \{t,r\},$ denotes the vector of transmission/reflection coefficients of STARS, $\mathbf{H}_k^n = \mathrm{diag}( \mathbf{v}_k^n ) \mathbf{G}^n$ denotes the cascaded channel from the BS to user $k$, and $n_k[n] \sim \mathcal{CN}(0, \sigma_k^2)$ denotes the additive complex Gaussian noise at user $k$. 

\begin{figure}[t!]
  \centering
  \includegraphics[width=0.4\textwidth]{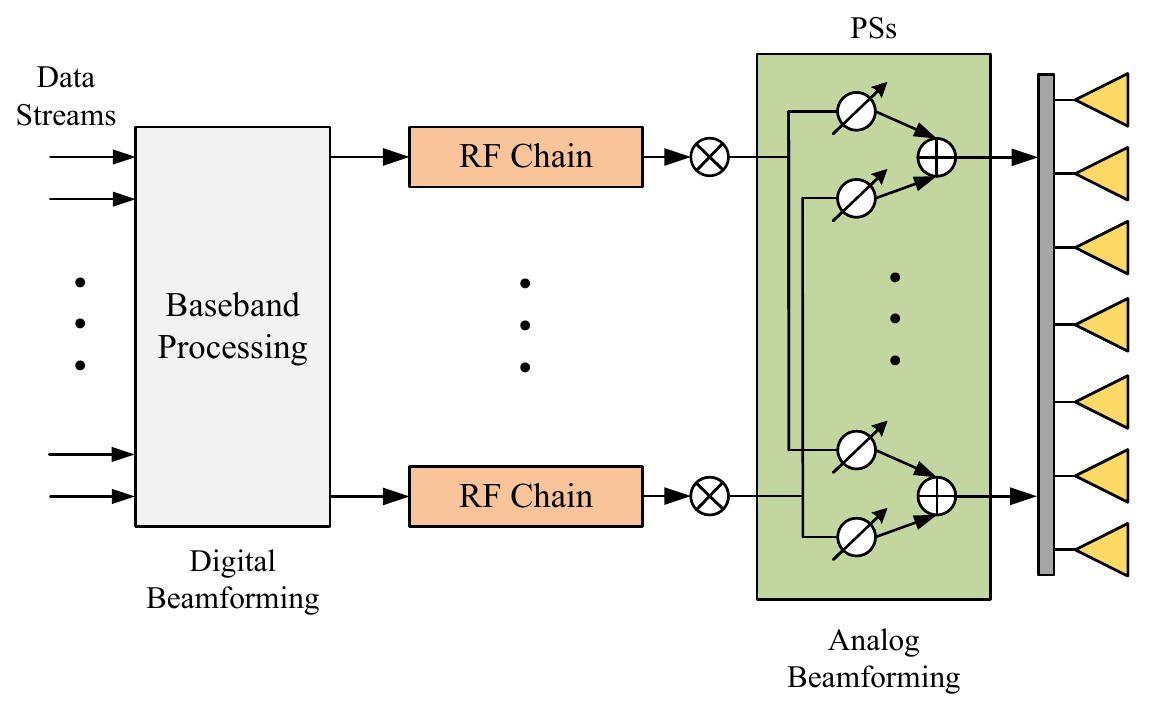}
  \caption{Hybrid beamforming architecture at the BS.}
  \label{fig:narrow_hybrid}
\end{figure}

\subsection{Problem Formulation} \label{sec:narrow_problem_formulation}

Both SE and EE are important performance metrics for communication systems. Specifically, SE is defined as the sum of the achievable rate of communication users. According to \eqref{eqn:narrow_receive_signal},  the achievable rate of user $k, \forall k \in \mathcal{K}_{\chi}, \chi \in \{t,r\},$ is given by
\begin{equation}
  R_k = \log_2 \left( 1 + \frac{| \boldsymbol{\theta}_{\chi}^T \mathbf{H}_k^n \mathbf{F}_{\mathrm{RF}} \mathbf{f}_{\mathrm{BB},k}|^2}{ \sum_{i \in \mathcal{K}, i \neq k} |\boldsymbol{\theta}_{\chi}^T \mathbf{H}_k^n \mathbf{F}_{\mathrm{RF}} \mathbf{f}_{\mathrm{BB},j}|^2 + \sigma_k^2 } \right).
\end{equation} 
The system SE is given by
\begin{equation}
  f_{\mathrm{SE}} = \sum_{k \in \mathcal{K}} R_k.
\end{equation}
Moreover, EE is defined as the ratio of SE to average power consumption. In this work, we exploit a practical rate-dependent power consumption model as follows:
\begin{equation}
  P = \|\mathbf{F}_{\mathrm{RF}} \mathbf{F}_{\mathrm{BB}} \|_F^2 + \xi f_{\mathrm{SE}} + P_c,
\end{equation}
where $\|\mathbf{F}_{\mathrm{RF}} \mathbf{F}_{\mathrm{BB}} \|_F^2$ is the transmit power consumption, $\xi$ denotes the dynamic power consumption per unit data rate incurred by the coding, decoding, and backhaul processes \cite{bjornson2015optimal}, and $P_c$ denotes the rate-independent power consumption as follows. More particularly, $P_c$ can be modeled as follows: 
\begin{equation}
P_c = P_{\mathrm{BS}} + P_{\mathrm{BB}} + N_{\mathrm{RF}} P_{\mathrm{RF}} + N_{\mathrm{RF}} N P_{\mathrm{PS}} + P_{\mathrm{STAR}} + K P_{\mathrm{UE}},
\end{equation}
where $P_{\mathrm{BS}}$, $P_{\mathrm{BB}}$, $P_{\mathrm{RF}}$, $P_{\mathrm{PS}}$, $P_{\mathrm{STAR}}$, and $P_{\mathrm{UE}}$ correspond to the power consumption of the oscillator and circuit at the BS, the baseband processing, each RF chain, each PS, the STARS, and the circuit at each user, respectively. Specifically, $P_{\mathrm{STAR}}$ is defined in equations \eqref{eqn:STAR_power} and \eqref{eqn:STAR_power_coupled} for independent phase-shift and coupled phase-shift STARSs, respectively. Hence, the EE is given by 
\begin{equation}
  f_{\mathrm{EE}} = \frac{f_{\mathrm{SE}}}{P} = \frac{f_{\mathrm{SE}} }{\|\mathbf{F}_{\mathrm{RF}} \mathbf{F}_{\mathrm{BB}} \|_F^2 + \xi f_{\mathrm{SE}}  + P_c}.
\end{equation}

We aim to optimize the hybrid beamformers $\mathbf{F}_{\mathrm{RF}}$ and $\mathbf{F}_{\mathrm{BB}}$ at the BS, along with the transmission and reflection coefficients $\boldsymbol{\theta}_t$ and $\boldsymbol{\theta}_r$ of the STARS, in order to maximize SE and EE. However, maximizing SE and EE requires different design strategies. Specifically, SE maximization aims to use all available power, whereas EE maximization involves balancing SE and power consumption. To investigate the optimization of both SE and EE, we formulate a general optimization problem as follows:
\begin{subequations} \label{problem:narrow_tradeoff_2}
  \begin{align}
      \max_{\mathbf{F}_{\mathrm{RF}}, \mathbf{F}_\mathrm{BB}, \boldsymbol{\theta}_t, \boldsymbol{\theta}_r }  \quad &\frac{f_{\mathrm{SE}}}{w \left(\|\mathbf{F}_{\mathrm{RF}} \mathbf{F}_{\mathrm{BB}} \|_F^2 + \xi f_{\mathrm{SE}}\right) + P_c} \\
      \label{constraint:narrow_1}
      \mathrm{s.t.} \quad & \|\mathbf{F}_{\mathrm{RF}} \mathbf{F}_{\mathrm{BB}} \|_F^2 \le P_t,  \\
      \label{constraint:narrow_2}
      & \boldsymbol{\theta}_{\chi} \in \mathcal{F}, \forall \chi \in \{t,r\}, \\
      \label{constraint:narrow_3}
      & |[\mathbf{F}_{\mathrm{RF}}]_{i,j}| = 1, \forall i, j.
  \end{align}
\end{subequations}
where constraint \eqref{constraint:narrow_1} represents the total power constraint, constraint \eqref{constraint:narrow_2} refers to the transmission and reflection coefficients constraint for either the independent phase-shift STARS or the coupled phase-shift STARS and $\mathcal{F}$ denotes the corresponding feasible set, and constraint \eqref{constraint:narrow_3} is the unit-modulus constraint of PSs. Moreover, a weight factor $w$ is introduced to regulate the rate-dependent power. In particular, the optimization problem \eqref{problem:narrow_tradeoff_2} reduces to the SE maximization problem when $w=0$, and to the EE maximization problem when $w=1$. This problem is challenging to solve due to the non-convex fractional objective function, coupling of optimization variables, non-convex feasible set $\mathcal{F}$, and unit-modulus constraint \eqref{constraint:narrow_3}.

\subsection{Proposed Solution for Independent Phase-shift STARSs} \label{sec:narrow_ideal_solution}
In this subsection, we propose a PDD-based algorithm to solve problem \eqref{problem:narrow_tradeoff_2} for the independent phase shift STARS. In this case, the feasible set $\mathcal{F}$ is given by 
\begin{equation}
  \mathcal{F} = \left\{ \boldsymbol{\theta}_{\chi}, \forall \chi \in \{t,r\} | \beta_{t,m}^2 + \beta_{r,m}^2 = 1, \forall m \right\}
\end{equation}
To solve this problem, we first introduce auxiliary variables $\eta$, $a$, $b$, and $r_k, \forall k \in \mathcal{K}$ to transform \eqref{problem:narrow_tradeoff_2} into a more tractable form. We then prove the following lemma.

\begin{lemma} \label{lemma_1}
  \emph{
    Problem \eqref{problem:narrow_tradeoff_2} is equivalent to the following optimization problem\footnote{With the equivalent transformation in \textbf{Lemma \ref{lemma_1}}, a quality-of-service (QoS) constraint $r_k \ge \overline{R}_k$, where $\overline{R}_k$ denotes the required minimum rate, can be introduced to each user without affecting the convexity of the optimization problem. Therefore, the resulting optimization can also be solved by the algorithms proposed in this work.}:
    \begin{subequations} \label{problem_narrow_transformed}
      \begin{align}
        & \hspace{-0.8cm} \max_{\scriptstyle \mathbf{F}_{\mathrm{RF}}, \mathbf{F}_\mathrm{BB}, \boldsymbol{\theta}_t, \boldsymbol{\theta}_r \atop \scriptstyle \eta, a, b, r_k} \quad \eta \\
          \label{constraint:narrow_transformed_1}
          \hspace{1cm} \mathrm{s.t.} \quad &\eta \le \frac{a^2}{b}, \\
          \label{constraint:narrow_transformed_2}
          & a^2 \le \sum_{k \in \mathcal{K}} r_k, \\
          \label{constraint:narrow_transformed_3}
          & w \big(\|\mathbf{F}_{\mathrm{RF}} \mathbf{F}_{\mathrm{BB}} \|_F^2 + \xi \sum_{k \in \mathcal{K}} r_k \big) + P_c \le b,\\
          \label{constraint:narrow_transformed_4}
          & r_k \le R_k, \forall k \in \mathcal{K}, \\
          & \eqref{constraint:narrow_1}-\eqref{constraint:narrow_3},
      \end{align}
    \end{subequations}
  }
\end{lemma}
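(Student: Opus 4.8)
The plan is to read \eqref{problem_narrow_transformed} as an epigraph-style reformulation of \eqref{problem:narrow_tradeoff_2}: $r_k$ plays the role of a lower surrogate for the per-user rate $R_k$, $b$ an upper surrogate for the denominator $w(\|\mathbf{F}_{\mathrm{RF}}\mathbf{F}_{\mathrm{BB}}\|_F^2+\xi\sum_{k}r_k)+P_c$, and the pair $(a,\eta)$ turns the fractional objective into the hypograph constraint $\eta\le a^2/b$ coupled with $a^2\le\sum_{k}r_k$. I would establish the equivalence by two one-sided arguments, using at the outset that constraints \eqref{constraint:narrow_1}--\eqref{constraint:narrow_3} are common to both problems, so the block $(\mathbf{F}_{\mathrm{RF}},\mathbf{F}_{\mathrm{BB}},\boldsymbol{\theta}_t,\boldsymbol{\theta}_r)$ of any feasible point of \eqref{problem_narrow_transformed} is automatically feasible for \eqref{problem:narrow_tradeoff_2}, and conversely.

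\emph{(Original $\Rightarrow$ transformed.)} For any $(\mathbf{F}_{\mathrm{RF}},\mathbf{F}_{\mathrm{BB}},\boldsymbol{\theta}_t,\boldsymbol{\theta}_r)$ satisfying \eqref{constraint:narrow_1}--\eqref{constraint:narrow_3}, I would set $r_k=R_k$ (so $\sum_k r_k=f_{\mathrm{SE}}$), $b=w(\|\mathbf{F}_{\mathrm{RF}}\mathbf{F}_{\mathrm{BB}}\|_F^2+\xi f_{\mathrm{SE}})+P_c$, $a=\sqrt{f_{\mathrm{SE}}}$, and $\eta=a^2/b$; then \eqref{constraint:narrow_transformed_1}--\eqref{constraint:narrow_transformed_4} all hold (with \eqref{constraint:narrow_transformed_1}--\eqref{constraint:narrow_transformed_3} tight) and $\eta$ equals the objective of \eqref{problem:narrow_tradeoff_2} at that point, so the optimal value of \eqref{problem_narrow_transformed} is at least that of \eqref{problem:narrow_tradeoff_2}.

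\emph{(Transformed $\Rightarrow$ original.)} For any feasible $(\mathbf{F}_{\mathrm{RF}},\mathbf{F}_{\mathrm{BB}},\boldsymbol{\theta}_t,\boldsymbol{\theta}_r,\eta,a,b,\{r_k\})$ of \eqref{problem_narrow_transformed}, write $t=\sum_k r_k$, $S=f_{\mathrm{SE}}=\sum_k R_k$, and $c_0=w\|\mathbf{F}_{\mathrm{RF}}\mathbf{F}_{\mathrm{BB}}\|_F^2+P_c$; constraint \eqref{constraint:narrow_transformed_3} forces $b\ge c_0+w\xi t\ge P_c>0$. Chaining \eqref{constraint:narrow_transformed_1}, \eqref{constraint:narrow_transformed_2}, \eqref{constraint:narrow_transformed_3} gives $\eta\le a^2/b\le t/b\le t/(c_0+w\xi t)$, while \eqref{constraint:narrow_transformed_4} gives $t\le S$. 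The key step is the elementary fact that $g(t)=t/(c_0+w\xi t)$ is nondecreasing for $t\ge 0$, since $g'(t)=c_0/(c_0+w\xi t)^2>0$ whenever $c_0>0$; hence $\eta\le g(t)\le g(S)=f_{\mathrm{SE}}/\big(w(\|\mathbf{F}_{\mathrm{RF}}\mathbf{F}_{\mathrm{BB}}\|_F^2+\xi f_{\mathrm{SE}})+P_c\big)$, which is exactly the objective of \eqref{problem:narrow_tradeoff_2} at the same beamformers, so the optimal value of \eqref{problem_narrow_transformed} is no larger than that of \eqref{problem:narrow_tradeoff_2}. The two inequalities give equality of optimal values, and the explicit constructions map optimizers to optimizers, which is the equivalence claimed.

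The only nonroutine ingredient is the monotonicity of $g$: it is precisely what makes it harmless to raise every surrogate rate $r_k$ to $R_k$ even though that also inflates the denominator through the $\xi\sum_k r_k$ term, so that \eqref{constraint:narrow_transformed_2}--\eqref{constraint:narrow_transformed_4} can be taken tight at optimality (the case $w=0$ being trivial since then $c_0=P_c$); everything else is bookkeeping with the auxiliary variables. I would close by observing, as in the footnote, that $a^2/b$ remains jointly concave and $a^2\le\sum_k r_k$ convex, so appending a QoS bound $r_k\ge\overline{R}_k$ only adds a linear constraint and leaves this structure intact.
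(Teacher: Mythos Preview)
Your proof is correct and follows essentially the same approach as the paper: both arguments hinge on the monotonicity of $t\mapsto t/(c_0+w\xi t)$ (the paper sees it via the reciprocal form $\tilde\eta=\big((w\|\mathbf{F}_{\mathrm{RF}}\mathbf{F}_{\mathrm{BB}}\|_F^2+P_c)/\sum_k r_k+w\xi\big)^{-1}$, you via the derivative), which is what lets the surrogates $r_k$ be pushed up to $R_k$ without loss. Your two-sided bookkeeping with explicit optimizer maps is a bit more thorough than the paper's terse argument, but there is no substantive difference in strategy.
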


\begin{proof}
  According to constraints \eqref{constraint:narrow_transformed_1}--\eqref{constraint:narrow_transformed_3}, it can be readily proved that maximizing $\eta$ is equivalent to maximizing the following objective function:
  \begin{align}
    \tilde{\eta} = &\frac{\sum_{k \in \mathcal{K}} r_k}{w \left(\|\mathbf{F}_{\mathrm{RF}} \mathbf{F}_{\mathrm{BB}} \|_F^2 + \xi \sum_{k \in \mathcal{K}} r_k \right) + P_c} \nonumber \\
    = & \left( \frac{w \|\mathbf{F}_{\mathrm{RF}} \mathbf{F}_{\mathrm{BB}} \|_F^2 + P_c}{\sum_{k \in \mathcal{K}} r_k} + w \xi \right)^{-1}.
  \end{align} 
  It can be seen that $\tilde{\eta}$ increases as $\sum_{k \in \mathcal{K}} r_k$ increase. Then, according to constraint \eqref{constraint:narrow_transformed_4}, we have the maximum value of $\sum_{k \in \mathcal{K}} r_k$ is $f_{\mathrm{SE}} = \sum_{k \in \mathcal{K}} R_k$. As such, the maximum value of $\tilde{\eta}$ must be the maximum value of the following function
  \begin{equation}
    \bar{\tilde{\eta}} = \frac{f_{\mathrm{SE}}}{w \left(\|\mathbf{F}_{\mathrm{RF}} \mathbf{F}_{\mathrm{BB}} \|_F^2 + \xi f_{\mathrm{SE}}\right) + P_c},
  \end{equation}   
  which is the objective function of problem \eqref{problem:narrow_tradeoff_2}. Thus, the proof is completed.
\end{proof}

For the equivalent optimization problem in \textbf{Lemma \ref{lemma_1}}, the objective function is only related to the auxiliary variable $\eta$. However, the coupling between the remaining optimization variables still makes it difficult to solve the new optimization problem. To address this issue, we introduce additional auxiliary variables $\mathbf{F} = [\mathbf{f}_1,\dots,\mathbf{f}_K] = \mathbf{F}_{\mathrm{RF}} \mathbf{F}_{\mathrm{BB}}$ and $\mathbf{p}_k = \boldsymbol{\theta}_{\chi}^T \mathbf{H}_k^n \mathbf{F}, \forall k \in \mathcal{K}_{\chi}, \chi \in \{t,r\}$. In this case, the achievable rate of user $k$ can be rewritten as a function of $\mathbf{p}_k$, i.e., 
\begin{equation}
  R_k(\mathbf{p}_k) = \log_2 \left( 1 + \frac{|p_{k,k}|^2}{ \sum_{i \in \mathcal{K}, i \neq k} |p_{k,i}|^2 + \sigma_k^2 }\right),
\end{equation} 
where $p_{k,i}$ denotes the $i$-th entry of $\mathbf{p}_k$.   
Then, problem \eqref{problem_narrow_transformed} can be further transformed into the following equivalent optimization problem:
\begin{subequations} \label{problem:narrow_equality}
  \begin{align}
      \max_{\scriptstyle \mathbf{F}_{\mathrm{RF}}, \mathbf{F}_\mathrm{BB}, \mathbf{F}, \mathbf{p}_k, \atop \scriptstyle \boldsymbol{\theta}_t, \boldsymbol{\theta}_r, \eta, a, b, r_k} \quad &\eta \\
      \label{constraint:narrow_transmit_power_transformed}
      \mathrm{s.t.} \quad & \|\mathbf{F}\|_F^2 \le P_t, \\
      \label{constraint:narrow_power_transformed}
      & w \big(\|\mathbf{F} \|_F^2 + \xi \sum_{k \in \mathcal{K}} r_k \big) + P_c \le b,\\
      \label{constraint:narrow_rate_transformed}
      & r_k \le R_k(\mathbf{p}_k), \forall k \in \mathcal{K}, \\
      \label{constraint:narrow_equal_1}
      & \mathbf{F} = \mathbf{F}_{\mathrm{RF}} \mathbf{F}_{\mathrm{BB}}, \\
      \label{constraint:narrow_equal_2}
      & \mathbf{p}_k = \boldsymbol{\theta}_{\chi}^T \mathbf{H}_k^n \mathbf{F}, \forall k \in \mathcal{K}_{\chi}, \chi \in \{t,r\},\\
      & \eqref{constraint:narrow_2}, \eqref{constraint:narrow_3}, \eqref{constraint:narrow_transformed_1}, \eqref{constraint:narrow_transformed_2}.
  \end{align}
\end{subequations}
In this optimization, all couplings have been transformed into equality constraints \eqref{constraint:narrow_equal_1} and \eqref{constraint:narrow_equal_2}, which motivates us to apply the PDD framework \cite{shi2020penalty} to solve it. The PDD framework relies on constructing an augmented Lagrangian (AL) problem for the original optimization problem. By introducing the dual variables $\mathbf{\Psi} \in \mathbb{C}^{N \times K}$ and $\boldsymbol{\lambda}_k \in \mathbb{C}^{1 \times K}, \forall k \in \mathcal{K},$ for the equality constraints \eqref{constraint:narrow_equal_1} and \eqref{constraint:narrow_equal_2}, respectively, the following AL problem of problem \eqref{problem:narrow_equality} can be formulated:
\begin{subequations} \label{problem:narrow_AL}
  \begin{align}
      \max_{\scriptstyle \mathbf{F}_{\mathrm{RF}}, \mathbf{F}_\mathrm{BB}, \mathbf{F}, \mathbf{p}_k, \atop \scriptstyle \boldsymbol{\theta}_t, \boldsymbol{\theta}_r, \eta, a, b, r_k} \quad &\eta
      - \frac{1}{2 \rho} \|\mathbf{F} - \mathbf{F}_{\mathrm{RF}} \mathbf{F}_{\mathrm{BB}} + \rho \mathbf{\Psi} \|_F^2 \nonumber \\
      - \sum_{\chi \in \{t,r\}} &\sum_{k \in \mathcal{K}_{\chi}} \frac{1}{2\rho} \| \mathbf{p}_k - \boldsymbol{\theta}_{\chi}^T \mathbf{H}_k^n \mathbf{F} + \rho \boldsymbol{\lambda}_k\|^2  \\
      &\hspace{-2cm} \mathrm{s.t.} \quad \eqref{constraint:narrow_2}, \eqref{constraint:narrow_3}, \eqref{constraint:narrow_transformed_1}, \eqref{constraint:narrow_transformed_2}, \eqref{constraint:narrow_transmit_power_transformed} - \eqref{constraint:narrow_rate_transformed},
  \end{align}
\end{subequations}
where $\rho \ge 0$ is the penalty factor. The PDD framework follows a double-loop structure, where the inner loop solves the AL problem \eqref{problem:narrow_AL} and the outer loop updates the dual variables and the penalty factor based on the results obtained in the inner loop. Note that the AL problem \eqref{problem:narrow_AL} is separable, which motivates us to solve it through block coordinate descent (BCD). Specifically, we divide the optimization variables into four blocks, namely $\{\mathbf{F}, \mathbf{p}_k, \eta, a, b, r_k\}$, $\{\boldsymbol{\theta}_t, \boldsymbol{\theta}_r\}$, $\mathbf{F}_{\mathrm{RF}}$, and $\mathbf{F}_{\mathrm{BB}}$. In each iteration of the inner loop, each block is updated sequentially by fixing the other blocks. The details for developing the PDD-based algorithm are given as follows.

\subsubsection{Subproblem With Respect to $\{\mathbf{F}, \mathbf{p}_k, \eta, a, b, r_k\}$}
By fixing the other blocks, the subproblem with respect to $\{\mathbf{F}, \mathbf{p}_k, \eta, a, b, r_k\}$ is given by
\begin{subequations} \label{problem:narrow_subproblem_1}
  \begin{align}
      \max_{\scriptstyle \mathbf{F}, \mathbf{p}_k, \atop \scriptstyle \eta, a, b, r_k} \quad &\eta
      - \frac{1}{2 \rho} \|\mathbf{F} - \mathbf{F}_{\mathrm{RF}} \mathbf{F}_{\mathrm{BB}} + \rho \mathbf{\Psi} \|_F^2 \nonumber \\
      &- \sum_{\chi \in \{t,r\}} \sum_{k \in \mathcal{K}_{\chi}} \frac{1}{2\rho} \| \mathbf{p}_k - \boldsymbol{\theta}_{\chi}^T \mathbf{H}_k^n \mathbf{F} + \rho \boldsymbol{\lambda}_k\|^2  \\
      \mathrm{s.t.} \quad & \eqref{constraint:narrow_transformed_1}, \eqref{constraint:narrow_transformed_2}, \eqref{constraint:narrow_transmit_power_transformed} - \eqref{constraint:narrow_rate_transformed}.
  \end{align}
\end{subequations}
The non-convexity of this problem only lies in constraints \eqref{constraint:narrow_transformed_1} and \eqref{constraint:narrow_rate_transformed}. We observe that the right-hand side of the constraint \eqref{constraint:narrow_transformed_1}, i.e., $a^2/b$, is a quadratic-over-linear function, which is jointly convex for $a$ and $b$. This motivates us to exploit successive convex approximation (SCA) to approximate it. Let $\breve{a}$ and $\breve{b}$ denote the results of $a$ and $b$ obtained in the previous iteration of the inner loop, respectively. The following convex lower bound of $a^2/b$ can be obtained via the first-order Taylor expansion:
\begin{equation} \label{eqn:narrow_sca}
  \frac{a^2}{b} \ge \frac{2 \breve{a}}{\breve{b}} a - \left( \frac{\breve{a}}{\breve{b}} \right)^2 b \triangleq \breve{\varpi}(a,b).
\end{equation}
Now, we show that the constraint \eqref{constraint:narrow_rate_transformed} can be approximated in a similar way. By defining $\gamma_k$ as the signal-to-interference-plus-noise ratio (SINR) term in $\mathbf{R}_k(\mathbf{p}_k)$, the constraint \eqref{constraint:narrow_rate_transformed} can be reformulated as 
\begin{equation}
  2^{r_k} - 1 \le \frac{|p_{k,k}|^2}{ \sum_{i \in \mathcal{K}, i \neq k} |p_{k,i}|^2 + \sigma_k^2} =  \gamma_k, \forall k \in \mathcal{K}.
\end{equation}
The expression of $\gamma_k$ is also in a quadratic-over-linear form. Let $\breve{\mathbf{p}}_k$ denote the result of $\mathbf{p}_k$ obtained in the previous iteration of the inner loop. The following convex lower bound of $\gamma_k$ can be obtained based on the results in \eqref{eqn:narrow_sca}:
\begin{align}
  \gamma_k \ge &\frac{2 \mathrm{Re} \{ \breve{p}_{k,k}^* p_{k,k} \} }{I_k(\breve{\mathbf{p}}_k)} - \left| \frac{\breve{p}_{k,k}}{I_k(\breve{\mathbf{p}}_k)} \right|^2 I_k(\mathbf{p}_k) \triangleq \breve{\gamma}_k(\mathbf{p}_k),
\end{align}
where $I_k(\mathbf{p}_k) =\sum_{i \in \mathcal{K}, i \neq k} |p_{k,i}|^2 + \sigma_k^2$ and $\breve{p}_{k,i}$ denotes the $i$-th entry of $\breve{\mathbf{p}}_k$. As a result, problem \eqref{problem:narrow_subproblem_1} can be approximated by the following optimization problem:
\begin{subequations} \label{problem:narrow_subproblem_1_1}
  \begin{align}
      \max_{\scriptstyle \mathbf{F}, \mathbf{p}_k, \atop \scriptstyle \eta, a, b, r_k} \quad &\eta
      - \frac{1}{2 \rho} \|\mathbf{F} - \mathbf{F}_{\mathrm{RF}} \mathbf{F}_{\mathrm{BB}} + \rho \mathbf{\Psi} \|_F^2 \nonumber \\
      &- \sum_{\chi \in \{t,r\}} \sum_{k \in \mathcal{K}_{\chi}} \frac{1}{2\rho} \| \mathbf{p}_k - \boldsymbol{\theta}_{\chi}^T \mathbf{H}_k^n \mathbf{F} + \rho \boldsymbol{\lambda}_k\|^2  \\
      \mathrm{s.t.} \quad & \eta \le \breve{\varpi}(a,b), \\
      & 2^{r_k} -1 \le \breve{\gamma}_k(\mathbf{p}_k), \forall k \in \mathcal{K}, \\
      & \eqref{constraint:narrow_transformed_2}, \eqref{constraint:narrow_transmit_power_transformed}, \eqref{constraint:narrow_power_transformed}.
  \end{align}
\end{subequations}
The above optimization problem is convex and can be effectively solved by the standard interior-point algorithm. 

\subsubsection{Subproblem With Respect to $\{ \boldsymbol{\theta}_t, \boldsymbol{\theta}_r \}$}
The block $\{ \boldsymbol{\theta}_t, \boldsymbol{\theta}_r \}$ only appears in the last term of the objective function and the constraint \eqref{constraint:narrow_2} of the AL problem. Thus, the corresponding subproblem is given by   
\begin{subequations} \label{problem:narrow_subproblem_2}
  \begin{align}
      \min_{\boldsymbol{\theta}_t, \boldsymbol{\theta}_r} \quad &\sum_{\chi \in \{t,r\}} \sum_{k \in \mathcal{K}_{\chi}} \| \mathbf{p}_k - \boldsymbol{\theta}_{\chi}^T \mathbf{H}_k^n \mathbf{F} + \rho \boldsymbol{\lambda}_k\|^2 \\
      \mathrm{s.t.} \quad & \beta_{t,m}^2 + \beta_{r,m}^2 = 1, \forall m,
  \end{align}
\end{subequations}
which is non-convex due to the quadratic equality constraint. To address this issue, we propose a low-complexity element-wise algorithm to solve it, where one entry of $\boldsymbol{\theta}_{\chi}, \forall \chi \in \{t,r\},$ is optimized at each iteration while fixing the others. In particular, recall that $\beta_{\chi,m} \in [0,1]$ and $\phi_{\chi,m} \in [0, 2\pi]$ denote the amplitude and phase of the $m$-th entry of $\boldsymbol{\theta}_{\chi}$, respectively. Then, the subproblem with respect $\beta_{\chi,m}$ and $\phi_{\chi,m}, \forall \chi \in \{t,r\},$ is given in the following lemma.

\begin{lemma} \label{lemma_2}
  \emph{By fixing the other entries of $\boldsymbol{\theta}_{\chi}, \forall \chi \in \{t,r\},$ the optimization problem with respect to $\beta_{\chi,m}$ and $\phi_{\chi,m}, \forall \chi \in \{t,r\},$ is given by
  \begin{subequations} \label{problem:narrow_theta_1}
    \begin{align}
        \min_{ \scriptstyle \beta_{t,m}, \beta_{r,m} \atop \scriptstyle \phi_{t,m}, \phi_{r,m}} \quad &\sum_{\chi \in \{t,r\}} c_{\chi,m} \beta_{\chi,m}^2 -  2 \beta_{\chi,m} \mathrm{Re} \{ d_{\chi,m}^* e^{j \phi_{\chi,m}} \} \\
        \mathrm{s.t.} \quad & \beta_{t,m}, \beta_{r,m} \in [0,1], \beta_{t,m}^2 + \beta_{r,m}^2 = 1,
    \end{align}
  \end{subequations}
  where $c_{\chi,m}$ and $d_{\chi,m}$ are given in \eqref{aux_theta}.}  
\end{lemma}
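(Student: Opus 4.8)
The plan is to derive \eqref{problem:narrow_theta_1} by a direct expansion of the objective of \eqref{problem:narrow_subproblem_2}, isolating the single entry $\theta_{\chi,m}=\beta_{\chi,m}e^{j\phi_{\chi,m}}$ of $\boldsymbol{\theta}_\chi$ that is being updated while all other entries are held fixed. The first step is to make the linear dependence on $\boldsymbol{\theta}_\chi$ explicit: writing $\mathbf{g}_{k,m}\in\mathbb{C}^{1\times K}$ for the $m$-th row of $\mathbf{H}_k^n\mathbf{F}\in\mathbb{C}^{M\times K}$, one has $\boldsymbol{\theta}_\chi^T\mathbf{H}_k^n\mathbf{F}=\sum_{m'=1}^{M}\theta_{\chi,m'}\mathbf{g}_{k,m'}$ for every $k\in\mathcal{K}_\chi$. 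Collecting into a single residual row vector $\mathbf{q}_{k,m}=\mathbf{p}_k+\rho\boldsymbol{\lambda}_k-\sum_{m'\neq m}\theta_{\chi,m'}\mathbf{g}_{k,m'}$ everything that does not involve the index $m$, the $k$-th summand of the objective of \eqref{problem:narrow_subproblem_2} becomes $\|\mathbf{q}_{k,m}-\theta_{\chi,m}\mathbf{g}_{k,m}\|^2$, a function of the scalar $\theta_{\chi,m}$ only.

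The second step is a routine expansion of this squared norm, namely $\|\mathbf{q}_{k,m}-\theta_{\chi,m}\mathbf{g}_{k,m}\|^2=\|\mathbf{q}_{k,m}\|^2+|\theta_{\chi,m}|^2\|\mathbf{g}_{k,m}\|^2-2\mathrm{Re}\{\theta_{\chi,m}^*\,\mathbf{q}_{k,m}\mathbf{g}_{k,m}^H\}$, where the two cross terms have been merged using the fact that they are complex conjugates of one another. Substituting $\theta_{\chi,m}=\beta_{\chi,m}e^{j\phi_{\chi,m}}$ replaces $|\theta_{\chi,m}|^2$ by $\beta_{\chi,m}^2$ and turns the cross term into $-2\beta_{\chi,m}\mathrm{Re}\{e^{-j\phi_{\chi,m}}\mathbf{q}_{k,m}\mathbf{g}_{k,m}^H\}$. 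Summing first over $k\in\mathcal{K}_\chi$ and then over $\chi\in\{t,r\}$, and dropping the $\beta,\phi$-independent constants $\sum_{k}\|\mathbf{q}_{k,m}\|^2$, the objective collapses to $\sum_{\chi\in\{t,r\}}\big(c_{\chi,m}\beta_{\chi,m}^2-2\beta_{\chi,m}\mathrm{Re}\{d_{\chi,m}^*e^{j\phi_{\chi,m}}\}\big)$ with $c_{\chi,m}=\sum_{k\in\mathcal{K}_\chi}\|\mathbf{g}_{k,m}\|^2$ and $d_{\chi,m}=\sum_{k\in\mathcal{K}_\chi}\mathbf{q}_{k,m}\mathbf{g}_{k,m}^H$, which are exactly the quantities recorded in \eqref{aux_theta}; here I use $\mathrm{Re}\{z\,e^{-j\phi}\}=\mathrm{Re}\{z^*e^{j\phi}\}$ to bring the linear term into the stated form.

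The last step is to handle the constraints. For the independent phase-shift STARS the feasible set $\mathcal{F}$ imposes only the per-element energy-conservation equality together with the box bounds, and these are \emph{separable} across $m$: freezing the other entries of $\boldsymbol{\theta}_t$ and $\boldsymbol{\theta}_r$ leaves exactly $\beta_{t,m},\beta_{r,m}\in[0,1]$ and $\beta_{t,m}^2+\beta_{r,m}^2=1$, while the phases $\phi_{t,m},\phi_{r,m}$ remain unconstrained. Combining this with the reduced objective above yields precisely \eqref{problem:narrow_theta_1}.

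I do not expect a genuine obstacle here: the content is entirely an algebraic bookkeeping exercise. The only points that need care are (i) tracking the conjugations when merging the two cross terms and when matching the linear coefficient to the sign convention of \eqref{aux_theta}, and (ii) ensuring that the residual $\mathbf{q}_{k,m}$, and hence $d_{\chi,m}$, is assembled from the most recent values of the other entries of $\boldsymbol{\theta}_\chi$ in the element-wise sweep, so that the lemma is consistent with the BCD update actually performed.
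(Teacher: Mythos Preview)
Your proposal is correct and follows essentially the same route as the paper: both isolate the quadratic dependence of the objective on the single entry $\theta_{\chi,m}$ and read off the scalar coefficients $c_{\chi,m}$ and $d_{\chi,m}$. The only cosmetic difference is that you obtain the coefficients by direct expansion of $\|\mathbf{q}_{k,m}-\theta_{\chi,m}\mathbf{g}_{k,m}\|^2$, whereas the paper first rewrites the full objective as the vector quadratic $\boldsymbol{\theta}_\chi^H\mathbf{\Phi}_\chi\boldsymbol{\theta}_\chi-2\mathrm{Re}\{\boldsymbol{\theta}_\chi^H\boldsymbol{\upsilon}_\chi\}$ and then identifies $c_{\chi,m},d_{\chi,m}$ by matching the derivative $\partial g/\partial\vartheta_{\chi,m}^*$, which yields the computationally convenient form $d_{\chi,m}=[\mathbf{\Phi}_\chi]_{m,m}[\boldsymbol{\theta}_\chi]_m-[\mathbf{\Phi}_\chi\boldsymbol{\theta}_\chi]_m+[\boldsymbol{\upsilon}_\chi]_m$ recorded in \eqref{aux_theta}.
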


\begin{proof}
  Please refer to Appendix A.
\end{proof}

For problem \eqref{problem:narrow_theta_1}, it is not difficult to show that the optimal value of $\phi_{\chi,m}$ is given by 
\begin{equation} \label{optimal_phi}
  \phi_{\chi,m}^\star = \angle d_{\chi,m}, \forall \chi \in \{t,r\},
\end{equation}
which is independent to the value of $\beta_{\chi,m}$. Therefore, it only remains to find the optimal value of $\beta_{\chi,m}$. To this end, we define an auxiliary variable $\vartheta_m \in [0, \frac{\pi}{2}]$ such that $\beta_{t,m} = \sin \vartheta_m$ and $\beta_{r,m} = \cos \vartheta_m$. In this case, the constraints on $\beta_{t,m}$ and $\beta_{r,m}$ are automatically satisfied. Then, by substituting \eqref{optimal_phi}, the optimization problem \eqref{problem:narrow_theta_1} can be reformulated as follows:
\begin{subequations} \label{problem:narrow_theta_2}
  \begin{align}
      \min_{\vartheta_m} \quad &c_{t,m} \sin^2 \vartheta_m + c_{r,m} \cos^2 \vartheta_m \nonumber \\
      & -  2 |d_{t,m}|\sin \vartheta_m -  2 |d_{r,m}|\cos \vartheta_m\\
      \mathrm{s.t.} \quad & \vartheta_m \in [0, \frac{\pi}{2}].
  \end{align}
\end{subequations}
The above optimization problem is to find the minimum of a single-variable function on a fixed interval. Such a problem can be effectively solved by some standard methods, such as golden-section search. Let $\theta_m^\star$ denote the optimal solution to problem \eqref{problem:narrow_theta_2}. Then, the optimal $\beta_{t,m}$ and $\beta_{r,m}$ are given by 
\begin{equation} \label{optimal_beta}
  \beta_{t,m}^\star = \sin \vartheta_m^\star, \beta_{r,m}^\star = \cos \vartheta_m^\star.
\end{equation} 
Therefore, problem \eqref{problem:narrow_subproblem_2} can be solved by exploiting \textbf{Algorithm \ref{alg:narrow_theta}}.

\begin{algorithm}[tb]
  \caption{Element-wise algorithm for solving \eqref{problem:narrow_subproblem_2}.}
  \label{alg:narrow_theta}
  \begin{algorithmic}[1]
      \STATE{initialize $\boldsymbol{\theta}_{\chi}, \forall \chi \in \{t,r\}$.}
      \REPEAT
      \FOR{ $m \in \{1,\dots,M\}$  }
        \STATE{calculate $c_{\chi,m}$ and $d_{\chi,m}$, according to \eqref{aux_theta}.  }
        \STATE{calculate $\phi_{\chi,m}^\star,$ according to \eqref{optimal_phi}. }
        \STATE{calculate $\beta_{\chi,m}^\star,$ according to \eqref{optimal_beta}. }
        \STATE{update $[\boldsymbol{\theta}_{\chi}]_m$ as $\beta_{\chi,m}^\star e^{j \phi_{\chi,m}^\star}$.  }
      \ENDFOR
      \UNTIL{the fractional reduction of the objective value falls below a predefined threshold.}
  \end{algorithmic}
\end{algorithm}

\subsubsection{Subproblem With Respect to $\mathbf{F}_{\mathrm{RF}}$}
The block $\mathbf{F}_{\mathrm{RF}}$ only contributes to the second term in the objective function and the unit-modulus constraint \eqref{constraint:narrow_3} of the AL problem, leading to the following subproblem:
\begin{subequations} \label{problem:narrow_subproblem_3}
  \begin{align}
      \min_{\mathbf{F}_{\mathrm{RF}}} \quad & \| \mathbf{F} - \mathbf{F}_{\mathrm{RF}} \mathbf{F}_{\mathrm{BB}} + \rho \mathbf{\Psi} \|_F^2\\
      \mathrm{s.t.} \quad & |[\mathbf{F}_{\mathrm{RF}}]_{i,j}| = 1, \forall i, j.
  \end{align}
\end{subequations}
The above subproblem can be transformed into the following equivalent form: 
\begin{subequations} \label{problem:narrow_subproblem_3_1}
  \begin{align}
      \min_{\mathbf{F}_{\mathrm{RF}}} \quad & \mathrm{tr}( \mathbf{F}_{\mathrm{RF}}^H \mathbf{F}_{\mathrm{RF}} \mathbf{A}) - 2 \mathrm{Re}\{ \mathrm{tr}( \mathbf{F}_{\mathrm{RF}}^H \mathbf{B} ) \}  \\
      \mathrm{s.t.} \quad & |[\mathbf{F}_{\mathrm{RF}}]_{i,j}| = 1, \forall i, j,
  \end{align}
\end{subequations}
where $\mathbf{A} = \mathbf{F}_{\mathrm{BB}} \mathbf{F}_{\mathrm{BB}}^H$ and $\mathbf{B} = (\mathbf{F} + \rho \mathbf{\Psi}) \mathbf{F}_{\mathrm{BB}}^H$. This problem is non-convex due to the unit-modulus constraint. Similarly, this problem can be solved by a low-complexity element-wise algorithm. Following the same path in Appendix A, the subproblem with respect to the $(i,j)$-th entry of $[\mathbf{F}_{\mathrm{RF}}]_{i,j}$ can be expressed as
\begin{subequations} \label{problem:narrow_subproblem_3_2}
  \begin{align}
    \min_{[\mathbf{F}_{\mathrm{RF}}]_{i,j}} \quad & p_{i,j} |[\mathbf{F}_{\mathrm{RF}}]_{i,j}|^2 - 2 \mathrm{Re}\{ q_{i,j}^* [\mathbf{F}_{\mathrm{RF}}]_{i,j} \}, \\
    \mathrm{s.t.} \quad & |[\mathbf{F}_{\mathrm{RF}}]_{i,j}| = 1,
\end{align}
\end{subequations}
where $p_{i,j}$ is some real number and $q_{i,j}$ is given by 
\begin{equation} \label{eqn:narrow_q_ij}
  q_{i,j} = [\mathbf{F}_{\mathrm{RF}}]_{i,j} [\mathbf{A}]_{j,j} - [\mathbf{F}_{\mathrm{RF}} \mathbf{A}]_{i,j} + [\mathbf{B}]_{i,j}.
\end{equation}
Given that $|[\mathbf{F}_{\mathrm{RF}}]_{i,j}| = 1$, problem \eqref{problem:narrow_subproblem_3_2} is equivalent to maximizing $\mathrm{Re}\{ q_{i,j}^* [\mathbf{F}_{\mathrm{RF}}]_{i,j} \}$ subject to the unit-modulus constant. It can be readily obtained that the optimal solution is 
\begin{equation}
  [\mathbf{F}_{\mathrm{RF}}]_{i,j} = \frac{q_{i,j}}{|q_{i,j}|}.
\end{equation}
Therefore, problem \eqref{problem:narrow_subproblem_3_1} can be efficiently solved using the algorithm described in \textbf{Algorithm \ref{alg:narrow_fixed_point}}.

\begin{algorithm}[tb]
  \caption{Element-wise algorithm for solving \eqref{problem:narrow_subproblem_3_1}.}
  \label{alg:narrow_fixed_point}
  \begin{algorithmic}[1]
      \STATE{initialize $\mathbf{F}_{\mathrm{RF}}$.}
      \REPEAT
      \FOR{ $(i,j) \in \{1,\dots,N\} \times \{1,\dots,N_{\mathrm{RF}}\}$  }
        \STATE{calculate $q_{i,j}$ according to \eqref{eqn:narrow_q_ij}.  }
        \STATE{update $[\mathbf{F}_{\mathrm{RF}}]_{i,j}$ as $q_{i,j} / |q_{i,j}|$. }
      \ENDFOR
      \UNTIL{the fractional reduction of the objective value falls below a predefined threshold.}
  \end{algorithmic}
\end{algorithm}

\subsubsection{Subproblem With Respect to $\mathbf{F}_{\mathrm{BB}}$}
The subproblem with respect to $\mathbf{F}_{\mathrm{BB}}$ is given by
\begin{align}
  \label{problem:narrow_subproblem_4}
  \min_{\mathbf{F}_{\mathrm{BB}}} \quad & \| \mathbf{F} - \mathbf{F}_{\mathrm{RF}} \mathbf{F}_{\mathrm{BB}} + \rho \mathbf{\Psi} \|_F^2.
\end{align}
The above problem is an unconstrained convex optimization problem. Thus, the optimal solution can be obtained by the first-order optimality condition and is given by 
\begin{equation} \label{eqn:narrow_subproblem_4_solution}
  \mathbf{F}_{\mathrm{BB}}^\star = (\mathbf{F}_{\mathrm{RF}}^H \mathbf{F}_{\mathrm{RF}})^{-1} \mathbf{F}_{\mathrm{RF}}^H (\mathbf{F} + \rho \mathbf{\Psi}).
\end{equation}

\subsubsection{Update Dual Variables and Penalty Factor}
With the proposed solutions for the above four subproblems, the AL problem \eqref{problem:narrow_AL} can be efficiently solved by iteratively updating the blocks $\{\mathbf{F}, \mathbf{p}_k, \eta, a, b, r_k\}$, $\{\boldsymbol{\theta}_t, \boldsymbol{\theta}_r\}$, $\mathbf{F}_{\mathrm{RF}}$, and $\mathbf{F}_{\mathrm{BB}}$ within the inner loop of the PDD framework.  
In the outer loop, the dual variables and the penalty factor are updated according to the following policy \cite{shi2020penalty}. Firstly, we define the constraint violation function as 
\begin{equation}
  h = \max \left\{\begin{array}{c}
    \|\mathbf{F} - \mathbf{F}_{\mathrm{RF}} \mathbf{F}_{\mathrm{BB}} \|_{\infty},\\
    \max_{k,\chi}  \| \mathbf{p}_k - \boldsymbol{\theta}_{\chi}^T \mathbf{H}_k^n \mathbf{F}\|_{\infty}
    \end{array}
    \right\}.
\end{equation}
If the constraint violation function $h$ is smaller than the predefined threshold $\varepsilon$ at the $n$-th iteration of the outer loop, the penalty factor keeps unchanged, while the dual variables are updated based on gradient descent as follows:
  \begin{subequations} \label{eqn:narrow_dual_update}
    \begin{align}
      &\mathbf{\Psi} \leftarrow \mathbf{\Psi} + \frac{1}{\rho} (\mathbf{F} - \mathbf{F}_{\mathrm{RF}} \mathbf{F}_{\mathrm{BB}}), \\
      & \boldsymbol{\lambda}_k \leftarrow \boldsymbol{\lambda}_k + \frac{1}{\rho} (\mathbf{p}_k - \boldsymbol{\theta}_{\chi}^T \mathbf{H}_k^n \mathbf{F}), \forall k \in \mathcal{K}_{\chi}, \chi \in \{t,r\}.
    \end{align} 
  \end{subequations}
If the constraint violation function is larger than the predefined threshold, the dual variables keep unchanged, and the penalty factor is updated by $\rho \leftarrow \kappa \rho$, where $0 <\kappa < 1$ is a reduction factor. The proposed PDD-based algorithm for problem \eqref{problem:narrow_equality} is summarized in \textbf{Algorithm \ref{alg:narrow_PDD}}. 

\subsubsection{Initialization, Convergence, and Complexity}
The initial optimization variables of \textbf{Algorithm \ref{alg:narrow_PDD}} are generated as follows. For the BS, the analog beamformer $\mathbf{F}_{\mathrm{RF}}$ is initialized based on the knowledge of the physical directions of the channel in \eqref{eqn:channel_path}. More specifically, the $N_{\mathrm{RF}}$ strongest paths are firstly selected from the total $L$ paths between the BS and the STARS. The analog beamformer is then initialized to generate directional beams toward the physical directions of these paths. For example, for the $n$-th strongest path with the physical direction $\varphi_n$, the $n$-th column $\mathbf{f}_{\mathrm{RF},n}^{\text{init}}$ of the initial analog beamformer is given by    
\begin{equation}
  \mathbf{f}_{\mathrm{RF},n}^{\text{init}} = \mathbf{b}(f, \varphi_n).
\end{equation}
Then, the digital beamformer $\mathbf{F}_{\mathrm{BB}}$ and the STARS coefficients $\{\boldsymbol{\theta}_t, \boldsymbol{\theta}_r\}$ are randomly initialized such that the constraint \eqref{constraint:narrow_2} and the constraint \eqref{eqn:ideal_STAR} are satisfied, respectively. The auxiliary variables are initialized such that the corresponding equality constraints are satisfied. 

In \textbf{Algorithm \ref{alg:narrow_PDD}}, for any given dual variables $\mathbf{\Psi}$ and $\boldsymbol{\lambda}_k$ and the penalty factor $\rho$, the AL problem \eqref{problem:narrow_AL} is solved by applying BCD with non-decreasing objective value over iterations in the inner loop. Moreover, the objective value is upper-bounded because of the maximum power constraint. Thus, according to the analysis in \cite{shi2020penalty}, the proposed algorithm is guaranteed to converge to a stationary point of problem \eqref{problem:narrow_equality}, which is also a stationary point of the original problem \eqref{problem:narrow_tradeoff_2}.

The complexity of \textbf{Algorithm \ref{alg:narrow_PDD}} primarily stems from the BCD iterations in the inner loop. Specifically, each BCD iteration involves solving problem \eqref{problem:narrow_subproblem_1_1}, the complexity of which is dominated by the second-order cone (SOC) constraints. This problem has $N_o = NK + K^2 + K + 3$ optimization variables, $K$ SOC constraints with a dimension of $K-1$, and two SOC constraints with a dimension of $NK$. Therefore, the complexity of solving problem \eqref{problem:narrow_subproblem_1_1} is $\mathcal{O}(N_o^3 + N_o^2 (K(K-1)^2 + 2N^2K^2))$ \cite{ben2001lectures}, where $\mathcal{O}(\cdot)$ is the big-O notation. Problems \eqref{problem:narrow_subproblem_2} and \eqref{problem:narrow_subproblem_3} are solved in an element-wise manner, which has a complexity of $\mathcal{O}(1)$ of each step. 
Finally, calculating the closed-form $\mathbf{F}_{\mathrm{BB}}$ according to \eqref{eqn:narrow_subproblem_4_solution} has a main complexity arising from the matrix inversion operation, which is $\mathcal{O}(N_{\mathrm{RF}}^3)$.

\begin{algorithm}[tb]
  \caption{PDD-based algorithm for solving \eqref{problem:narrow_equality}.}
  \label{alg:narrow_PDD}
  \begin{algorithmic}[1]
      \STATE{initialize the optimization variables, and set $0<c<1$.}
      \REPEAT
      \REPEAT
        \STATE{update $\{\mathbf{F}, \mathbf{p}_k, \eta, a, b, r_k\}$ by solving problem \eqref{problem:narrow_subproblem_1_1}. }
        \STATE{update $\{\boldsymbol{\theta}_t, \boldsymbol{\theta}_r\}$ by \textbf{Algorithm \ref{alg:narrow_theta}}.  }
        \STATE{update $\mathbf{F}_{\mathrm{RF}}$ by \textbf{Algorithm \ref{alg:narrow_fixed_point}}. }
        \STATE{update $\mathbf{F}_{\mathrm{BB}}$ by \eqref{eqn:narrow_subproblem_4_solution}. }
      \UNTIL{convergence.}
      \IF{$h \le \varepsilon$}
          \STATE{update the dual variables $\mathbf{\Psi}$ and $\{\boldsymbol{\lambda}_k\}_{k \in \mathcal{K}}$ by \eqref{eqn:narrow_dual_update}.}
      \ELSE
          \STATE{update the penalty factor as $\rho \leftarrow c \rho$.}
      \ENDIF
      \STATE{set $\varepsilon = 0.9 h$.}
      \UNTIL{$h$ falls below a predefined threshold.}
  \end{algorithmic}
\end{algorithm}

\subsection{Proposed Solution for Coupled Phase-shift STARSs} \label{sec:narrow_coupled_solution}
In this subsection, we extend the proposed PDD-based algorithm to the case of coupled phase-shift STARSs, where the feasible set $\mathcal{F}$ becomes
\begin{equation}
    \mathcal{F} = \left\{ \boldsymbol{\theta}_{\chi}, \forall \chi \in \{t,r\} \Big|  \begin{array}{l}
      \beta_{t,m}^2 + \beta_{r,m}^2 = 1, \forall m \\
      \cos(\phi_{t,m} - \phi_{r,m}) = 0, \forall m
    \end{array}  \right\}.
\end{equation} 
To address the additional coupled phase-shift constraint, we further introduce an equality constraint as follows \cite{wang2022coupled}:
\begin{equation}
  \boldsymbol{\vartheta}_{\chi} = \boldsymbol{\theta}_{\chi}, \forall \chi \in \{t,r\},
\end{equation}
where $\boldsymbol{\vartheta}_{\chi} = [\tilde{\beta}_{i,1}e^{j \tilde{\phi}_{i,1}} ,\dots,\tilde{\beta}_{i,M} e^{j \tilde{\phi}_{i,M}}]^T, \forall \chi \in \{t,r\}, $ is the auxiliary variable. Based on \textbf{Lemma \ref{lemma_1}}, the resulting optimization problem can be transformed into the following equivalent form:
\begin{subequations} \label{problem:narrow_coupled}
  \begin{align}
      & \hspace{-2cm} \max_{\scriptstyle \mathbf{F}_{\mathrm{RF}}, \mathbf{F}_\mathrm{BB}, \mathbf{F}, \mathbf{p}_k, \atop \scriptstyle \boldsymbol{\theta}_t, \boldsymbol{\theta}_r, \boldsymbol{\vartheta}_t, \boldsymbol{\vartheta}_r, \eta, a, b, r_k} \quad \eta \\
      \label{constraint:narrow_equality_STAR}
      \mathrm{s.t.} \quad & \boldsymbol{\vartheta}_{\chi} = \boldsymbol{\theta}_{\chi}, \forall \chi \in \{t,r\}, \\
      \label{constraint:narrow_coupled_1}
      & \tilde{\beta}_{t,m}^2 + \tilde{\beta}_{r,m}^2 = 1, \forall m \in \mathcal{M}, \\
      \label{constraint:narrow_coupled_2}
      & \cos(\tilde{\phi}_{t,m} - \tilde{\phi}_{r,m}) = 0, \forall m \in \mathcal{M}, \\
      & \eqref{constraint:narrow_3}, \eqref{constraint:narrow_transformed_1}, \eqref{constraint:narrow_transformed_2}, \eqref{constraint:narrow_power_transformed} - \eqref{constraint:narrow_equal_2}.
  \end{align}
\end{subequations}
In this problem, the complicated constraint in the feasible set $\mathcal{F}$ is transferred to the auxiliary variables $\boldsymbol{\vartheta}_{\chi}$ and the optimization variables $\boldsymbol{\theta}_{\chi}$ is only subject to the equality constraint \eqref{constraint:narrow_equality_STAR}. Therefore, by defining the dual variables $\boldsymbol{\mu}_i \in \mathbb{C}^{M \times 1}, \forall \chi \in \{t,r\}, $ for the equality constraint \eqref{constraint:narrow_equality_STAR}, we can formulate the following AL problem:
\begin{subequations}
  \begin{align}
      \max_{\scriptstyle \mathbf{F}_{\mathrm{RF}}, \mathbf{F}_\mathrm{BB}, \mathbf{F}, \mathbf{p}_k, \atop \scriptstyle \boldsymbol{\theta}_t, \boldsymbol{\theta}_r, \boldsymbol{\vartheta}_t, \boldsymbol{\vartheta}_r, \eta, a, b, r_k} & \quad \eta - \frac{1}{2 \rho} \|\mathbf{F} - \mathbf{F}_{\mathrm{RF}} \mathbf{F}_{\mathrm{BB}} + \rho \mathbf{\Psi} \|_F^2 \nonumber \\
      - & \sum_{\chi \in \{t,r\}} \sum_{k \in \mathcal{K}_{\chi}} \frac{1}{2\rho} \| \mathbf{p}_k - \boldsymbol{\theta}_{\chi}^T \mathbf{H}_k^n \mathbf{F} + \rho \boldsymbol{\lambda}_k\|^2 \nonumber \\
      - & \sum_{\chi \in \{t,r\}} \frac{1}{2\rho} \| \boldsymbol{\vartheta}_{\chi} - \boldsymbol{\theta}_{\chi} + \rho \boldsymbol{\mu}_i\|^2 \\
      \mathrm{s.t.} \quad  \eqref{constraint:narrow_3}, \eqref{constraint:narrow_transformed_1}, & \eqref{constraint:narrow_transformed_2}, \eqref{constraint:narrow_power_transformed}, \eqref{constraint:narrow_rate_transformed}, \eqref{constraint:narrow_coupled_1}, \eqref{constraint:narrow_coupled_2},
  \end{align}
\end{subequations}
In a similar manner to the independent phase-shift STARS case, the AL problem can be efficiently solved through BCD by dividing the optimization variables into five blocks, which include $\{\mathbf{F}, \mathbf{p}_k, \eta, a, b, r_k\}$, $\{\boldsymbol{\theta}_t, \boldsymbol{\theta}_r\}$, $\{\boldsymbol{\vartheta}_t, \boldsymbol{\vartheta}_r\}$, $\mathbf{F}_{\mathrm{RF}}$, and $\mathbf{F}_{\mathrm{BB}}$. Each of these blocks can be updated using the same method as that employed in the independent phase-shift STARS case, with the exception of $\{\boldsymbol{\theta}_t, \boldsymbol{\theta}_r\}$ and $\{\boldsymbol{\vartheta}_t, \boldsymbol{\vartheta}_r\}$, which require new optimization approaches due to differences in their respective objective functions and constraints. Therefore, we focus on solving the subproblems with respect to these two blocks in the following.

\subsubsection{Subproblem With Respect to $\{\boldsymbol{\theta}_t, \boldsymbol{\theta}_r\}$}
The block $\{\boldsymbol{\theta}_t, \boldsymbol{\theta}_r\}$ appears in the second and third terms of the objective function of the AL problem. The corresponding subproblem is thus given by 
\begin{align}
    \min_{\boldsymbol{\theta}_t, \boldsymbol{\theta}_r} \quad \sum_{\chi \in \{t,r\}} &\sum_{k \in \mathcal{K}_{\chi}} \| \mathbf{p}_k - \boldsymbol{\theta}_{\chi}^T \mathbf{H}_k^n \mathbf{F} + \rho \boldsymbol{\lambda}_k\|^2 \nonumber \\
    + &\sum_{\chi \in \{t,r\}} \| \boldsymbol{\vartheta}_{\chi} - \boldsymbol{\theta}_{\chi} + \rho \boldsymbol{\mu}_i\|^2.
\end{align}
With some algebraic manipulations, the above problem can be simplified as 
\begin{align}
  \min_{\boldsymbol{\theta}_t, \boldsymbol{\theta}_r} \quad &\sum_{\chi \in \{t,r\}} \Big( \sum_{k \in \mathcal{K}_{\chi}} \|\mathbf{\Phi}_k \boldsymbol{\theta}_{\chi} - \mathbf{u}_k \|^2 + \| \boldsymbol{\theta}_{\chi} - \tilde{\mathbf{u}}_i \|^2 \Big),
\end{align}
where $\mathbf{\Phi}_k = (\mathbf{F} \mathbf{H}_k^n)^T$, $\mathbf{u}_k = \mathbf{p}_k^T + \rho \boldsymbol{\lambda}_k^T$, and $\tilde{\mathbf{u}}_i = \boldsymbol{\vartheta}_{\chi} + \rho \boldsymbol{\mu}_i$. This problem is an unconstrained convex optimization problem. Thus, the optimal solution can be obtained by checking the first-order optimality condition, which is given by 
\begin{align} \label{eqn:narrow_STAR_coupled_solution}
  \boldsymbol{\theta}_{\chi}^\star = \Big( \sum_{k \in \mathcal{K}_{\chi}} \mathbf{\Phi}_k^H \mathbf{\Phi}_k + \mathbf{I}_M \Big)^{-1} \Big( \sum_{k \in \mathcal{K}_{\chi}} \mathbf{\Phi}_k^H \mathbf{u}_k + \tilde{\mathbf{u}}_i \Big).
\end{align}

\subsubsection{Subproblem With Respect to $\{\boldsymbol{\vartheta}_t, \boldsymbol{\vartheta}_r\}$}
The block $\{\boldsymbol{\vartheta}_t, \boldsymbol{\vartheta}_r\}$ is only related to the second term of the objective function and the constraints \eqref{constraint:narrow_coupled_1} and \eqref{constraint:narrow_coupled_2} of the AL problem. Thus, the subproblem with respect to $\{\boldsymbol{\vartheta}_t, \boldsymbol{\vartheta}_r\}$ is given by 
\begin{subequations}
  \begin{align}
      \min_{\boldsymbol{\vartheta}_t, \boldsymbol{\vartheta}_r} \quad &\sum_{\chi \in \{t,r\}} \| \boldsymbol{\vartheta}_{\chi} - \boldsymbol{\theta}_{\chi} + \rho \boldsymbol{\mu}_i \|^2 \\
      \mathrm{s.t.} \quad & \eqref{constraint:narrow_coupled_1}, \eqref{constraint:narrow_coupled_2}.
  \end{align}
\end{subequations}
Although this problem is non-convex due to the non-convex quadratic equality constraint and coupled phase-shift constraint, it has been shown in \cite{wang2022coupled} that it can be solved by iteratively updating the amplitude vector $\tilde{\boldsymbol{\beta}}_i \triangleq [\tilde{\beta}_{i,1}, \dots, \tilde{\beta}_{i,M}]^T, \forall \chi \in \{t,r\}, $ and the phase-shift vector $\tilde{\boldsymbol{\phi}}_i = [e^{j \tilde{\phi}_{i,1}},\dots, e^{j \tilde{\phi}_{i,M}}]^T, \forall \chi \in \{t,r\}$. The closed-form optimal solutions for the phase-shift vector and the amplitude vector when the other is fixed have been given in \cite[Proposition 1]{wang2022coupled} and \cite[Proposition 2]{wang2022coupled}, respectively, which is thus omitted here.

\subsubsection{Update Dual Variables and Penalty Factor}
Problem \eqref{problem:narrow_coupled} can also be solved by exploiting the PDD framework. The corresponding algorithm has a similar structure as \textbf{Algorithm \ref{alg:narrow_PDD}}. The differences with \textbf{Algorithm \ref{alg:narrow_PDD}} in the outer loop for updating dual variables and penalty factor are summarized as follows:
\begin{itemize}
  \item The new constraint violation function is defined as 
  \begin{equation}
    \tilde{h} = \max \left\{  \begin{array}{c}
      \|\mathbf{F} - \mathbf{F}_{\mathrm{RF}} \mathbf{F}_{\mathrm{BB}} \|_{\infty}, \\
      \max_{k,\chi} \| \mathbf{p}_k - \boldsymbol{\theta}_{\chi}^T \mathbf{H}_k^n \mathbf{F} \|_{\infty}, \\
      \max_{\chi} \|\boldsymbol{\vartheta}_{\chi} - \boldsymbol{\theta}_{\chi} \|_\infty
    \end{array}  \right\}.
  \end{equation}
  \item The additional dual variables $\{\boldsymbol{\mu}_t, \boldsymbol{\mu}_r\}$ are updated by 
  \begin{equation}
    \boldsymbol{\mu}_i \leftarrow \boldsymbol{\mu}_i + \frac{1}{\rho} (\boldsymbol{\vartheta}_{\chi} - \boldsymbol{\theta}_{\chi}), \forall \chi \in \{t,r\}.
  \end{equation}
\end{itemize}

\subsubsection{Initialization, Convergence, and Complexity}
To guarantee the performance of the new PDD-based algorithm, its initialization point can be selected as the output of \textbf{Algorithm \ref{alg:narrow_PDD}}, where the variable blocks in addition to $\{\boldsymbol{\vartheta}_t, \boldsymbol{\vartheta}_r\}$ has been well optimized without the coupled phase-shift constraints. The convergence to a stationary point is also guaranteed by the new PDD-based algorithm due to the non-decreasing and upper-bounded objective value over iterations. Compared with \textbf{Algorithm \ref{alg:narrow_PDD}}, the complexity of the new PDD-based algorithm mainly has differences in updating $\{\boldsymbol{\theta}_t, \boldsymbol{\theta}_r\}$ and $\{\boldsymbol{\vartheta}_t, \boldsymbol{\vartheta}_r\}$. Specifically, the complexity of updating according to \eqref{eqn:narrow_STAR_coupled_solution} is $\mathcal{O}(M^3)$. The complexity of updating each entry of $\{\boldsymbol{\vartheta}_t, \boldsymbol{\vartheta}_r\}$ based on the closed-form solution is $\mathcal{O}(6M)$ \cite{wang2022coupled}.

\section{Wideband System} \label{sec:wide}
In this section, we study the wideband STARS-aided THz communication system, where the OFDM technique is adopted to effectively utilize the wideband resources. Specifically, we focus on addressing the issue of beam split caused by the mismatch between the \emph{frequency-dependent} spatial wideband effect and the \emph{frequency-independent} beamforming structures at the BS and STARS. 

\subsection{Wideband Beam Split}
Fig. \ref{fig:wideband_hybrid} illustrates the conventional hybrid beamforming structure for wideband OFDM systems. In this structure, although different digital beamformers can be generated for different subcarriers with different frequencies, the analog beamformer is \emph{frequency-independent} due to the hardware limitation of the PSs. In other words, all subcarriers share the same analog beamformer in the conventional hybrid beamforming structure. However, the wideband THz channel, as detailed in Section \ref{sec:channel_model}, can be significantly \emph{frequency-dependent} due to the \emph{frequency-dependent} array response vectors. Consequently, the conventional hybrid beamforming may result in beam mismatch at different subcarriers, which is referred to as the \emph{beam split} effect. Specifically, when an analog beamformer is designed to steer a beam towards a specific direction at a subcarrier, the beams generated by this analog beamformer at other subcarriers will steer towards other directions, as shown in Fig. \ref{fig:beam_split_conventional}, leading to significant performance degradation. Furthermore, STARS shares a similar property as the analog beamforming structure, which has \emph{frequency-independent} passive beamforming. Therefore, \emph{beam split} effect also exists at STARS.

\begin{figure}[t!]
    \centering
    \subfigure[Conventional hybrid beamforming.]{
        \includegraphics[width=0.4\textwidth]{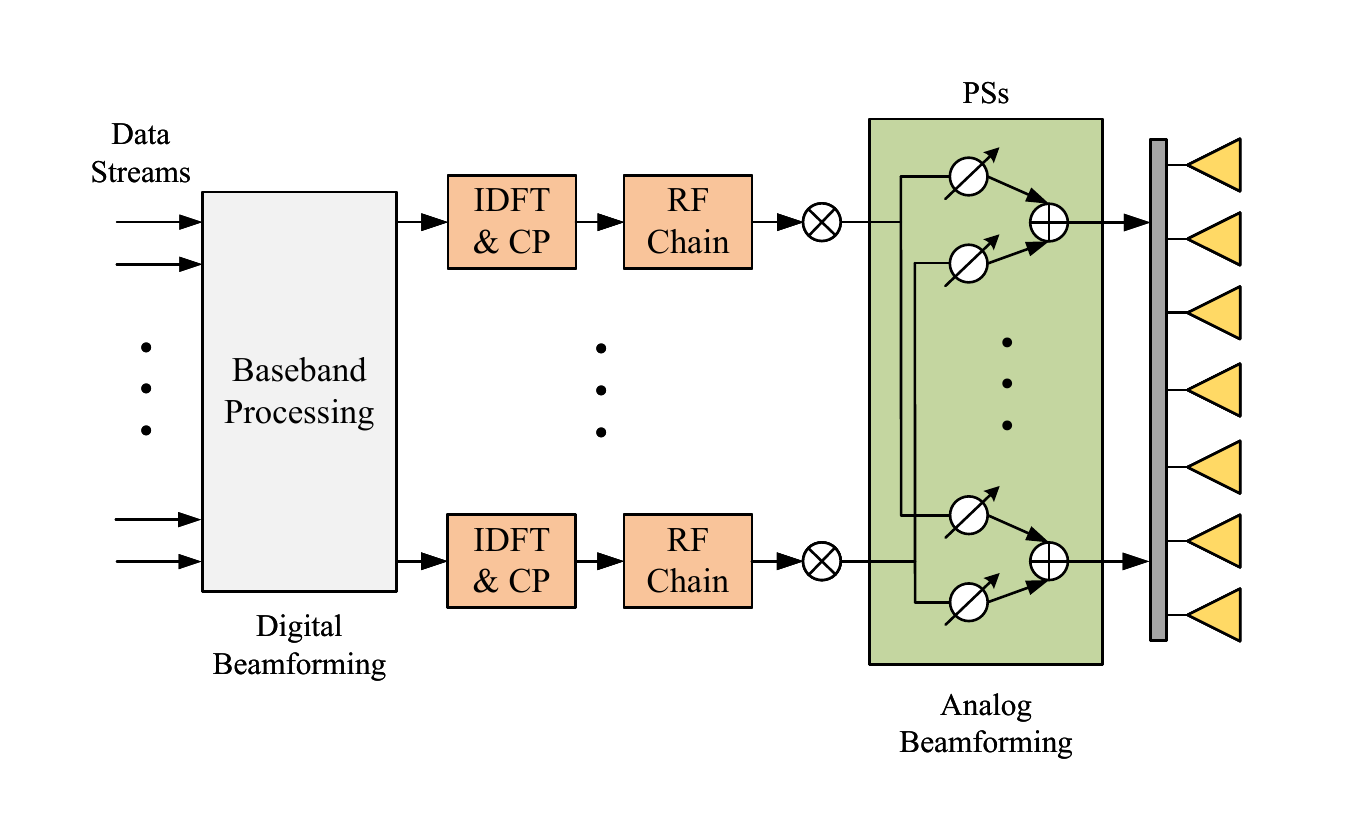}
        \label{fig:wideband_hybrid}
    }
    \subfigure[TTD-based hybrid beamforming.]{
        \includegraphics[width=0.4\textwidth]{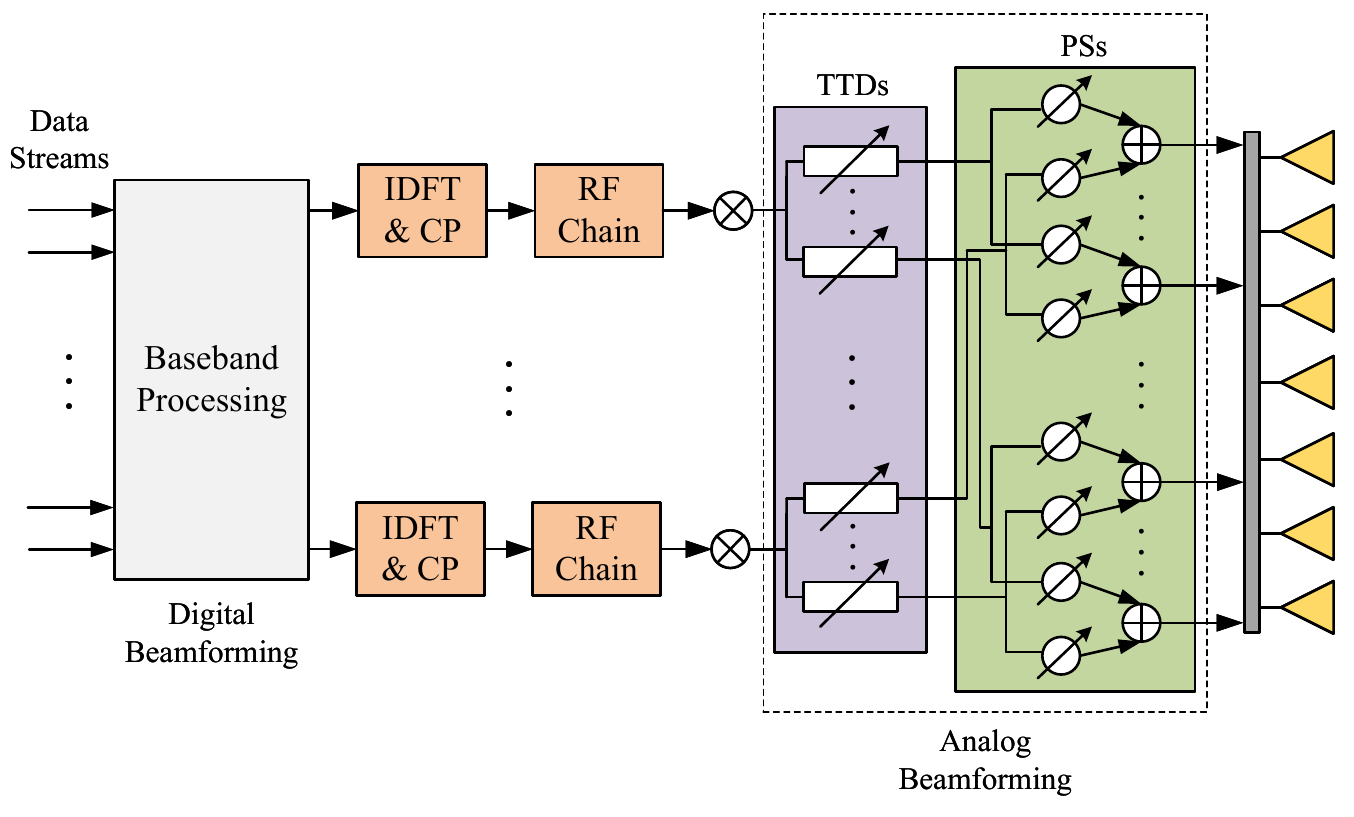}
        \label{fig:wideband_TTD_hybrid}
    }
    \caption{Beamforming structures at the BS in wideband THz communication systems.}
\end{figure}

\subsection{TTD-based Hybrid Beamforming}
As discussed above, the beam split effect mainly arises from the mismatch between the \emph{frequency-dependent} wideband channel and the \emph{frequency-independent} analog beamforming at the BS and passive beamforming at the STARS. To address this issue, a \emph{frequency-dependent} component is required in the system. Given the hardware limitations of the STARS, it is more practical to introduce such a component at the BS. Recently, TTD-based hybrid beamforming structures have been proposed \cite{gao2021wideband, dai2022delay}, as shown in Fig. \ref{fig:wideband_TTD_hybrid}. In this structure, a time-delay network realized by TTDs is introduced between the RF chains and the \emph{frequency-independent} PSs. Unlink PSs, TTDs are capable of achieving \emph{frequency-dependent} phase shifts. For example, a time delay $t$ realized by TTDs becomes a phase shift $e^{-2 \pi f_m t}$ at the subcarrier $m$, thus facilitating the \emph{frequency-dependent} analog beamforming. Consequently, as illustrated in Fig. \ref{fig:beam_split_TD}, the \emph{beam split} effect can be considerably minimized by appropriately designing the time delay of TTDs. However, in wideband STARS-aided THz communication systems, the TTDs need to be configured to mitigate the \emph{beam split} effect caused by both BS and STARS, which requires joint optimization. 

\begin{figure}[t]
    \centering
    \subfigure[]{
        \includegraphics[width=0.4\textwidth]{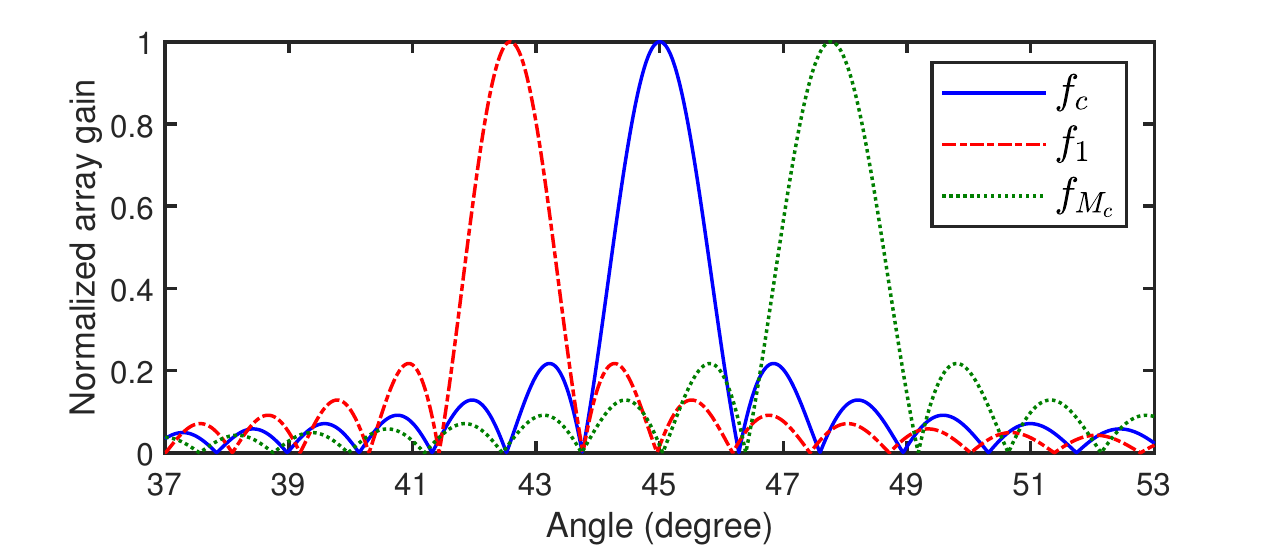}
        \label{fig:beam_split_conventional}
    }
    \subfigure[]{
        \includegraphics[width=0.4\textwidth]{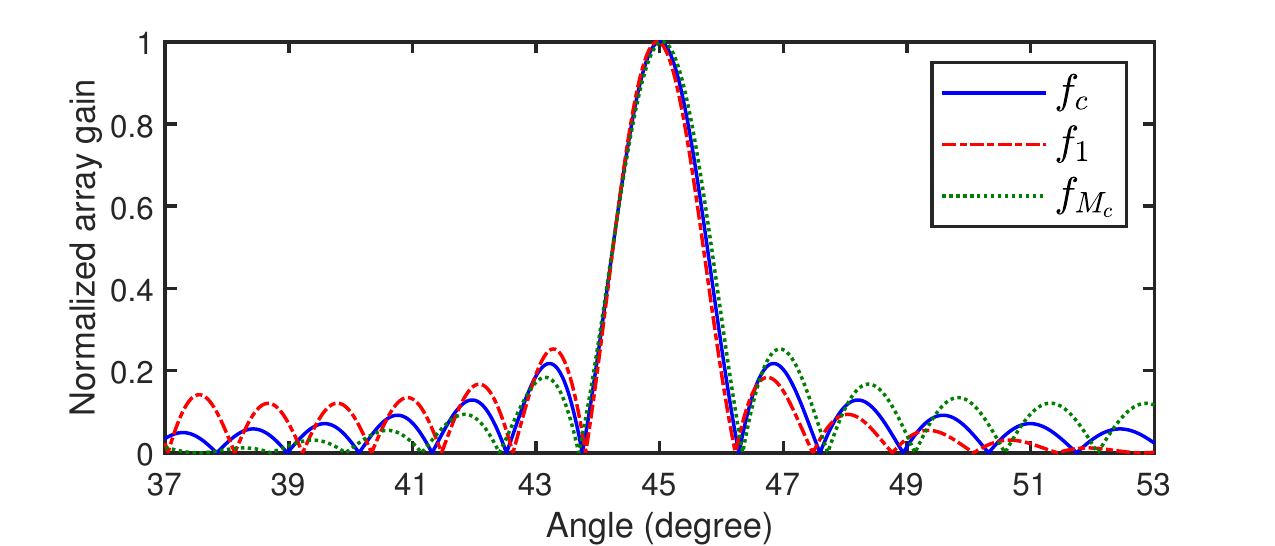}
        \label{fig:beam_split_TD}
    }
    \caption{Normalized array gain achieved by (a) the conventional hybrid beamforming and (b) the TTD-based hybrid beamforming at different frequencies \cite{dai2022delay}, where the desired physical direction is $45^\circ$. Other simulation setup is $N = 128$, $f_c = 0.1$ THz, $W = 10$ GHz, and $M_c = 10$.}
    \label{fig:beam_split}
\end{figure}

In the TTD-based hybrid structure, each RF chain is connected to $N$ PSs via $N_{\mathrm{T}}$ TTDs and each TTD is connected to $\frac{N}{N_{\mathrm{T}}}$ PSs. Thus, there are $N_{\mathrm{RF}} N_{\mathrm{T}}$ TTDs and $N_{\mathrm{RF}} N$ PSs in total.
Let $\mathbf{F}_{\mathrm{PS}} \in \mathbb{C}^{N \times N_{\mathrm{T}} N_{\mathrm{RF}}}$ denote the \emph{frequency-independent} analog beamformer achieved by PSs, $\mathbf{T}_m \in \mathbb{C}^{N_{\mathrm{T}} N_{\mathrm{RF}} \times N_{\mathrm{RF}}}$ denote the \emph{frequency-dependent} analog beamformer achieved by TTDs, $\mathbf{F}_m^{\mathrm{BB}} = [\mathbf{f}_{m,1}^{\mathrm{BB}}, \dots, \mathbf{f}_{m,K}^{\mathrm{BB}}] \in \mathbb{C}^{N_{\mathrm{RF}} \times K }$ denote the digital beamformer for $K$ user at subcarrier $m$, and $\tilde{\mathbf{s}}_m = [s_{m,1},\dots,\tilde{s}_{m,k}]^T \in \mathbb{C}^{K \times 1}$ denote the information symbols for $K$ users at subcarrier $m$. More particularly, the \emph{frequency-dependent} analog beamformer can be expressed as follows:
\begin{equation}
    \mathbf{F}_{\mathrm{PS}} = [\mathbf{F}_1^{\mathrm{PS}},\dots,\mathbf{F}_{N_{\mathrm{RF}}}^{\mathrm{PS}} ],
\end{equation}
where $\mathbf{F}_n^{\mathrm{PS}} = \mathrm{blkdiag}([\mathbf{f}_{n,1}^{\mathrm{PS}},\dots,\mathbf{f}_{n,N_{\mathrm{T}}}^{\mathrm{PS}}]) \in \mathbb{C}^{N \times N_{\mathrm{T}}}$ denote the analog beamformer connected to the $n$-th RF chain via TTDs and $\mathbf{f}_{n,i}^{\mathrm{PS}} \in \mathbb{C}^{\frac{N}{N_{\mathrm{T}}} \times 1}$ denote the corresponding analog beamformer connected to the $i$-th TTD. 
Due to the hardware limitation of PSs, each entry of $\mathbf{f}_{n,i}^{\mathrm{PS}}$ needs to satisfy the following constant-modulus constraint:
\begin{equation}
    |[\mathbf{f}_{n,i}^{\mathrm{PS}} ]_j| = 1, \forall n, i, j.
\end{equation}
The \emph{frequency-independent} analog beamformer can be expressed as follows:
\begin{equation} \label{eqn:wideband_Tm}
    \mathbf{T}_m = \mathrm{blkdiag} \left( [ e^{-j 2 \pi f_m \mathbf{t}_1},\dots, e^{-j 2 \pi f_m \mathbf{t}_{N_{\mathrm{RF}}}}] \right),
\end{equation} 
where $\mathbf{t}_n = [t_{n,1},...,t_{n, N_{\mathrm{T}}}]^T \in \mathbb{C}^{N_{\mathrm{T}} \times 1}$ denotes the time delays realized by TTDs connected to the $n$-th RF chain, where $t_{n,i} \ge 0$. One can observe from \eqref{eqn:wideband_Tm} that the phase shifts of matrices $\mathbf{T}_m$ and $\mathbf{T}_n, \forall n \neq m,$ are coupled with each other since they are realized by a common set of time delays. 
The transmit signal at the subcarrier $m$ through the TTD-based hybrid beamforming is expressed as follows:
\begin{equation}
    \tilde{\mathbf{x}}_m = \mathbf{F}_{\mathrm{PS}} \mathbf{T}_m \mathbf{F}_m^{\mathrm{BB}} \tilde{\mathbf{s}}_m = \mathbf{F}_{\mathrm{PS}} \mathbf{T}_m \sum_{k \in \mathcal{K}} \mathbf{f}_{m,k}^{\mathrm{BB}} \tilde{s}_{m,k}.
\end{equation} 
Assuming that $\tilde{\mathbf{s}}_m$ is an independent complex Gaussian signal, the covariance matrix of $\tilde{\mathbf{x}}_m$ can be obtained as $\tilde{\mathbf{Q}}_m = \mathbb{E}[ \tilde{\mathbf{x}}_m \tilde{\mathbf{x}}_m^H ] = \mathbf{F}_{\mathrm{PS}} \mathbf{T}_m \mathbf{F}_m^{\mathrm{BB}} (\mathbf{F}_{\mathrm{PS}} \mathbf{T}_m \mathbf{F}_m^{\mathrm{BB}})^H$. We assume the power constraint for each subcarrier to be the same, resulting in the following power constraint:
\begin{equation}
    \mathrm{tr}(\tilde{\mathbf{Q}}_m) =  \|\mathbf{F}_{\mathrm{PS}} \mathbf{T}_m \mathbf{F}_m^{\mathrm{BB}}\|_F^2 \le P_t.
\end{equation}
The received signal at subcarrier $m$ for user $k, \forall k \in \mathcal{K}_{\chi}, \chi \in \{t,r\},$ is given by 
\begin{align} \label{eqn:wideband_receive_signal}
    \tilde{y}_{m,k} = &\mathbf{h}_{m,k}^w \tilde{\mathbf{x}}_m + \tilde{n}_{m,k}  \nonumber \\
    = & \mathbf{v}_{m,k}^w \mathbf{\Theta}_{\chi} \mathbf{G}_m^w \mathbf{F}_{\mathrm{PS}} \mathbf{T}_m \sum_{k \in \mathcal{K}} \mathbf{f}_{m,k}^{\mathrm{BB}} \tilde{s}_{m,k} + \tilde{n}_{m,k}, \nonumber \\
    = & \underbrace{\boldsymbol{\theta}_{\chi}^T \mathbf{H}_{m,k}^w \mathbf{F}_{\mathrm{PS}} \mathbf{T}_m \mathbf{f}_{m,k}^{\mathrm{BB}} \tilde{s}_{m,k}}_{\text{desired signal}} \nonumber \\
    & \quad + \underbrace{\sum_{i \in \mathcal{K}, i \neq k} \boldsymbol{\theta}_{\chi}^T \mathbf{H}_{m,k}^w \mathbf{F}_{\mathrm{PS}} \mathbf{T}_m \mathbf{f}_{m,i}^{\mathrm{BB}} \tilde{s}_{i,k}}_{\text{inter-user interference}} + \tilde{n}_{m,k},
\end{align}  
where $\mathbf{H}_{m,k}^w = \mathrm{diag}(\mathbf{v}_{m,k}^w) \mathbf{G}_m^w$ denotes the cascaded channel from BS to user $k$ at subcarrier $m$, and $\tilde{n}_{m,k} \sim \mathcal{CN}(0,\sigma_{m,k}^2)$ denotes the additive complex Gaussian noise.

\subsection{Problem Formulation}
Similarly, we aim to maximize the SE and EE of wideband STARS-aided THz communication systems. According to \eqref{eqn:wideband_receive_signal}, the achievable rate for user $k, \forall k \in \mathcal{K}_{\chi}, \chi \in \{t,r\},$ at subcarrier $m$ is given by 
\begin{align}
    &\tilde{R}_{m,k} \nonumber \\[-0.1em]
    &= \log_2 \left( 1 + \frac{| \boldsymbol{\theta}_{\chi}^T \mathbf{H}_{m,k}^w \mathbf{F}_{\mathrm{PS}} \mathbf{T}_m \mathbf{f}_{m,k}^{\mathrm{BB}} |^2}{\sum_{i \in \mathcal{K}, i \neq k} |\boldsymbol{\theta}_{\chi}^T \mathbf{H}_{m,k}^w \mathbf{F}_{\mathrm{PS}} \mathbf{T}_m \mathbf{f}_{m,i}^{\mathrm{BB}}|^2 + \sigma_{m,k}^2 } \right).
\end{align}  
Thus, the SE of the wideband STARS-aided THz-OFDM system is given by 
\begin{equation}
    \tilde{f}_{\mathrm{SE}} = \mu \sum_{m \in \mathcal{M}_c} \sum_{k \in \mathcal{K}} \tilde{R}_{m,k},
\end{equation}
where $\mathcal{M}_c = \{1,\dots,M_c\}$, $\mu = 1/(M_c+ L_{\mathrm{CP}})$, and $L_{\mathrm{CP}} \ge \max_{k \in \mathcal{K}} \{ Q_k \}$ denotes the length of the cyclic prefix (CP) of the OFDM system. Then, the rate-dependent power consumption can be modeled as follows:
\begin{equation}
    \tilde{P} = \frac{1}{M_c} \sum_{m \in \mathcal{M}_c} \|\mathbf{F}_{\mathrm{PS}} \mathbf{T}_m \mathbf{F}_m^{\mathrm{BB}} \|_F^2 + \xi \tilde{f}_{\mathrm{SE}} + \tilde{P}_c,
\end{equation} 
where $\frac{1}{M_c} \sum_{m \in \mathcal{M}_c} \|\mathbf{F}_{\mathrm{PS}} \mathbf{T}_m \mathbf{F}_m^{\mathrm{BB}} \|_F^2$ is the average power consumption over all subcarriers, and $\tilde{P}_c$ denotes the rate-independent power consumption given as follows:
\begin{align}
    \tilde{P}_c = P_{\mathrm{BS}} &+ P_{\mathrm{BB}} + N_{\mathrm{RF}} P_{\mathrm{RF}} + N_{\mathrm{RF}} N_{\mathrm{T}} P_{\mathrm{TTD}} \nonumber \\
    &+  N_{\mathrm{RF}} N P_{\mathrm{RS}} + P_{\mathrm{STAR}} + K P_{\mathrm{UE}}.
\end{align} 
Here, $P_{\mathrm{TTD}}$ denotes the power consumption of each TTD. Then, the EE of the wideband system can be expressed as 
\begin{equation}
    \tilde{f}_{\mathrm{EE}} = \frac{\tilde{f}_{\mathrm{SE}}}{\tilde{P}} = \frac{\tilde{f}_{\mathrm{SE}}}{\frac{1}{M_c} \sum_{m \in \mathcal{M}_c} \|\mathbf{F}_{\mathrm{PS}} \mathbf{T}_m \mathbf{F}_m^{\mathrm{BB}} \|_F^2 + \xi \tilde{f}_{\mathrm{SE}} + \tilde{P}_c }
\end{equation}

The general optimization problem for SE and EE maximization can be formulated as follows:
\begin{subequations} \label{problem:wideband_tradeoff}
    \begin{align}
        \max_{\scriptstyle \mathbf{F}_m^\mathrm{BB}, \mathbf{f}_{n,i}^{\mathrm{PS}}, t_{n,i} \atop \scriptstyle \boldsymbol{\theta}_t, \boldsymbol{\theta}_r } & \frac{\tilde{f}_{\mathrm{SE}}}{ w \big(\frac{1}{M_c} \sum_{m \in \mathcal{M}_c} \|\mathbf{F}_{\mathrm{PS}} \mathbf{T}_m \mathbf{F}_m^{\mathrm{BB}} \|_F^2 + \xi \tilde{f}_{\mathrm{SE}} \big) + \tilde{P}_c } \\
        \label{constraint:wideband_power}
        \mathrm{s.t.} \quad & \|\mathbf{F}_{\mathrm{PS}} \mathbf{T}_m \mathbf{F}_m^{\mathrm{BB}} \|_F^2 \le P_t, \forall m, \\
        \label{constraint:wideband_0}
        & \boldsymbol{\theta}_{\chi} \in \mathcal{F}, \forall \chi \in \{t,r\}, \\
        \label{constraint:wideband_1}
        & |[\mathbf{f}_{n,i}^{\mathrm{PS}} ]_j| = 1, \forall n, i, j, \\
        \label{constraint:wideband_2}
        & t_{n,i} \ge 0, \forall n, i.
    \end{align}
\end{subequations}
Compared to the problem \eqref{problem:narrow_tradeoff_2} encountered in narrowband systems, the problem \eqref{problem:wideband_tradeoff} poses a greater challenge. This is primarily because it involves additional \emph{frequency-dependent} analog beamformers $\mathbf{T}_m$ realized by a common set of time delays $t_{n,i}$.

\subsection{Proposed Solution}
In this subsection, we propose a new PDD-based algorithm for solving problem \eqref{problem:wideband_tradeoff}, with a particular focus on optimizing the time delays $t_{n,i}$ of each TTD. In a manner similar to narrowband systems, we define auxiliary variables $\tilde{\eta}$, $\tilde{a}$, $\tilde{b}$, $\tilde{r}_{m,k}$, $\tilde{\mathbf{F}}_m$, and $\tilde{\mathbf{p}}_{m,k}$. Then, problem \eqref{problem:wideband_tradeoff} be transferred into the following equivalent form:      
\begin{subequations} \label{problem:wideband_tradeoff_2}
    \begin{align}
        & \hspace{-0.6cm} \max_{
            \begin{subarray}{c}
                \mathbf{F}_m^{\mathrm{BB}}, \mathbf{f}_{n,i}^{\mathrm{PS}}, t_{n,i}, \tilde{\mathbf{F}}_m, \tilde{\mathbf{p}}_{m,k}\\
                \boldsymbol{\theta}_t, \boldsymbol{\theta}_r, \tilde{\eta}, \tilde{a}, \tilde{b}, \tilde{r}_{m,k}\\
            \end{subarray}
            } \quad \tilde{\eta} \\
        \label{constraint:wideband_transformed_1}
        \hspace{0.4cm} \mathrm{s.t.} \quad & \tilde{\eta} \le \frac{\tilde{a}^2}{\tilde{b}}, \\
        \label{constraint:wideband_transformed_2}
        & \tilde{a}^2 \le \mu \sum_{m,k} \tilde{r}_{m,k}, \\
        \label{constraint:wideband_transformed_3}
        & w \big( \frac{1}{M_c} \sum_m \|\tilde{\mathbf{F}}_m\|_F^2 + \xi \mu \sum_{m,k} \tilde{r}_{m,k} \big) + \tilde{P_c} \le \tilde{b}, \\
        \label{constraint:wideband_transformed_4}
        & \tilde{r}_{m,k} \le \tilde{R}_{m,k}(\tilde{\mathbf{p}}_{m,k}), \forall m, k, \\
        \label{constraint:wideband_transformed_5}
        & \| \tilde{\mathbf{F}}_m \|_F^2 \le P_t, \forall m, \\
        \label{constraint:wideband_equality_1}
        & \tilde{\mathbf{F}}_m = \mathbf{F}_{\mathrm{PS}} \mathbf{T}_m \mathbf{F}_m^{\mathrm{BB}}, \forall m, \\
        \label{constraint:wideband_equality_2}
        & \tilde{\mathbf{p}}_{m, k} = \boldsymbol{\theta}_{\chi}^T \mathbf{H}_{m,k}^w \tilde{\mathbf{F}}_m, \forall m, k, \chi, \\
        & \eqref{constraint:wideband_0}-\eqref{constraint:wideband_2}.
    \end{align}
\end{subequations}
Here, $\tilde{R}_{m,k}(\tilde{\mathbf{p}}_{m,k})$ is defined as 
\begin{equation}
    \tilde{R}_{m,k}(\tilde{\mathbf{p}}_{m,k}) = \log_2 \left( 1 + \frac{|\tilde{p}_{k,k}^m|^2 }{ \sum_{i \in \mathcal{K}, i \neq k} |\tilde{p}_{k,i}^m|^2 + \sigma_{m,k}^2 } \right),
\end{equation}
where $\tilde{p}_{k,i}^m$ denotes the $i$-th entry of $\tilde{\mathbf{p}}_{m,k}$.   
Then, by introducing the dual variables $\tilde{\mathbf{\Psi}}_m$ and $\tilde{\boldsymbol{\lambda}}_{m,k}, \forall m \in \mathcal{M}_c, k \in \mathcal{K},$ for the equality constraints \eqref{constraint:wideband_equality_1} and \eqref{constraint:wideband_equality_2}, respectively, the following AL problem of \eqref{problem:wideband_tradeoff_2} can be formulated:
\begin{subequations}
    \begin{align}
        & \hspace{-1.8cm} \max_{
            \begin{subarray}{c}
            \mathbf{F}_m^{\mathrm{BB}}, \mathbf{f}_{n,i}^{\mathrm{PS}}, t_{n,i}, \tilde{\mathbf{F}}_m, \tilde{\mathbf{p}}_{m,k}, \\
            \boldsymbol{\theta}_t, \boldsymbol{\theta}_r, \tilde{\eta}, \tilde{a}, \tilde{b}, \tilde{r}_{m,k}\\
        \end{subarray}
        } \quad \tilde{\eta} - P_\rho\\ 
        \mathrm{s.t.} \quad & \eqref{constraint:wideband_0}-\eqref{constraint:wideband_2}, \eqref{constraint:wideband_transformed_1} - \eqref{constraint:wideband_transformed_5}.
    \end{align}
\end{subequations}
where
\begin{align}
    P_\rho = &\sum_{m} \frac{1}{2\rho} \|\tilde{\mathbf{F}}_m - \mathbf{F}_{\mathrm{PS}} \mathbf{T}_m \mathbf{F}_m^{\mathrm{BB}} + \rho \tilde{\mathbf{\Psi}}_m \|_F^2 \nonumber \\
        &- \sum_{m,k,\chi} \frac{1}{2\rho} \|\tilde{\mathbf{p}}_{m,k} - \boldsymbol{\theta}_{\chi}^T \mathbf{H}_{m,k}^w \tilde{\mathbf{F}}_m + \rho \tilde{\boldsymbol{\lambda}}_{m,k} \|^2.
\end{align} 
To address the AL problem presented above, the BCD is employed, whereby the optimization variable is divided into five blocks: $\{\tilde{\mathbf{F}}_m, \tilde{\mathbf{p}}_{m,k}, \tilde{\eta}, \tilde{a}, \tilde{b}, \tilde{r}_{m,k}\}$, $\{\boldsymbol{\theta}_t, \boldsymbol{\theta}_r\}$, $\{\mathbf{f}_{n,i}^{\mathrm{PS}}\}$, $\{\mathbf{F}_m^{\mathrm{BB}}\}$, and $\{t_{n,i}\}$. Notably, the subproblems associated with the first four blocks have the same structure as those encountered in narrowband systems and can be solved via the methods outlined in Sections \ref{sec:narrow_ideal_solution} and \ref{sec:narrow_coupled_solution} for independent and coupled phase-shift STARSs, respectively. Furthermore, the dual variables and penalty factor can also be updated similarly in the outer loop of the PDD framework. Therefore, we focus on solving the subproblem with respect to $\{t_{n,i}\}$ in the following.

\subsubsection{Subproblem With Respect to $\{t_{n,i}\}$}

The block $\{t_{n,i}\}$ only appears in the penalty term $P_\rho$ and the constraint \eqref{constraint:wideband_2}. Thus, the corresponding subproblem is given by
\begin{subequations} \label{problem:wideband_Tm}
    \begin{align}
        \min_{t_{n,i}} \quad & \sum_{m \in \mathcal{M}_c}  \|\tilde{\mathbf{F}}_m - \mathbf{F}_{\mathrm{PS}} \mathbf{T}_m \mathbf{F}_m^{\mathrm{BB}} + \rho \tilde{\mathbf{\Psi}}_m \|_F^2 \\
        \mathrm{s.t.} \quad & t_{n,i} \ge 0, \forall n, i.
    \end{align}
\end{subequations}
The problem described above can be converted into an unconstrained optimization problem. Specifically, the objective function can be reformulated as a function of $t_{n,i}$ by substituting equation \eqref{eqn:wideband_Tm}. From equation \eqref{eqn:wideband_Tm}, it is evident that $\mathbf{T}_m$ is a periodic function of $t_{n,i}$, with a period of $1/f_m$. Typically, the ratio of these periods is rational, indicating that the objective function is also periodic with respect to $t_{n,i}$. Consequently, any negative value of $t_{n,i}$ can be replaced by a positive value $t_{n,i}' \geq 0$ that produces the same objective value. Hence, the constraint $t_{n,i} \geq 0$ can be removed without affecting the solution. Now, with no constraint on $t_{n,i}$, problem \eqref{problem:wideband_Tm} can be formulated as an unconstrained optimization problem. Therefore, the optimization variables $t_{n,i}$ can be optimized directly using the quasi-Newton method \cite{nocedal2006numerical}. 

\subsubsection{Initialization, Convergence, and Complexity}

The initial optimization variables of the new PDD-based algorithm for wideband systems are generated as follows. Firstly, to initialize $\mathbf{f}_{n,i}^{\mathrm{PS}}$ and $t_{n,i}$, the $N_{\mathrm{RF}}$ strongest paths are selected from the total $L$ paths between the BS and STARS. Then, they are initialized following the design principle proposed in \cite{dai2022delay}. More specifically, given the $n$-th strongest path with direction $\varphi_n$, $\mathbf{f}_{n,i}^{\mathrm{PS}}$ is initialized as \cite[Eq. (30)]{dai2022delay}  
\begin{align}
    \mathbf{f}_{n,i}^{\mathrm{PS},\text{init}}  = e^{j \pi (i-1) \delta \sin \varphi_n } [\mathbf{b}(f_c, \varphi_n)]_{ (i-1) \delta+1 : i \delta }, 
    \forall i.
\end{align}
where $\delta = N / N_{\mathrm{T}}$. 
Then, the time delay for the $i$-th TTD connected to the $n$-th RF chain is initialized as \cite[Eq. (31)]{dai2022delay}  
\begin{equation}
    t_{n,i}^{\text{init}} = \begin{cases}
        (i-1) \frac{\delta \sin \varphi_n}{2 f_c}, & \sin \varphi_n \ge 0, \\
        (i-1) \frac{\delta \sin \varphi_n}{2 f_c} + (N_{\mathrm{T}} - 1) \left| \frac{\delta \sin \varphi_n}{2 f_c} \right|, & \sin \varphi_n < 0.
    \end{cases}
\end{equation} 
Then, the digital beamformers $\mathbf{F}_{\mathrm{BB}, m}$ and the STARS coefficients $\{\boldsymbol{\theta}_t, \boldsymbol{\theta}_r\}$ are randomly initialized within the feasible set. The auxiliary variables are initialized such that the corresponding equality constraints are satisfied. 

Similarly, the new developed PDD-based algorithm is also guaranteed to converge to a stationary point. The complexity of the new algorithm is analyzed as follows. Firstly, when updating the block $\{t_{n,i}\}$, the number of unconstrained optimization variables $t_{n,i}$ is equivalent to the number of TTDs, which is $N_{\mathrm{RF}} N_{\mathrm{T}}$. By using Broyden-Fletcher-Goldfarb-Shanno (BFGS) formula for updating the approximation of the Hessian matrix in the quasi-Newton method, the complexity of each iteration is $\mathcal{O}(N_{\mathrm{RF}}^2 N_{\mathrm{T}}^2)$ \cite{nocedal2006numerical}. Then, since the other variable blocks can be solved by the methods outlined in Sections \ref{sec:narrow_ideal_solution} and \ref{sec:narrow_coupled_solution}, their corresponding complexity can be analyzed similarly. Therefore, we omit it here.
\section{Numerical Results} \label{sec:result}

In this section, the numerical results obtained through Monte Carlo simulations are provided to evaluate the performance of the proposed STARS-aided THz wireless communication system in both narrowband and wideband systems. Fig. \ref{fig:simulation_setup} illustrates the considered three-dimensional simulation setup. In particular, it is assumed that a BS equipped with $N=128$ antennas and $N_{\mathrm{RF}} = 4$ RF chains is $10$ m away from the STARS. There are $K=4$ communication users located on half-circles centered at the STARS with a radius of $3$ m. The number of paths between the BS and STARS and between the STARS and users is assumed to be $L=4$ and $L_k = 4$, respectively. The azimuth and elevation physical angles of channel paths are randomly generated following $\mathcal{U}[-\frac{\pi}{2}, \frac{\pi}{2}]$. The transmit and receive antenna gain are set to $G_t = 25$ dBi and $G_r = 20$ dBi, respectively. The noise power density is assumed to be $-174$ dBm/Hz. The frequency-dependent medium absorption coefficient $k(f)$ in THz pathloss model is obtained from the high-resolution transmission (HITRAN) database \cite{rothman2013hitran2012}.

For power consumption, the rate-dependent consumption factor $\xi$ is set to $0.1$ W/(bit/s/Hz). For rate-independent power consumption, the practice values are adopted as $P_{\mathrm{BS}} = 3$ W \cite{bjornson2015optimal}, $P_{\mathrm{BB}} = 300$ mW \cite{mendez2016hybrid}, $P_{\mathrm{RF}} = 200$ mW \cite{dai2022delay}, $P_{\mathrm{PS}} = 30$ mW \cite{mendez2016hybrid}, $P_{\mathrm{TTD}} = 100$ mW \cite{cho2018true}, and $P_{\mathrm{UE}} = 100$ mW \cite{bjornson2015optimal}. For STARSs, the power consumption of each PIN diode and control circuit is $P_{\mathrm{PIN}} = 0.33$ mW and $P_{\mathrm{circ}} = 10$ W, respectively \cite{9206044}. Furthermore, it is assumed that the maximum tolerable error of the amplitudes and phase shifts is $0.005$ and $1^\circ$, respectively. Thus, according to the model proposed in Section \ref{sec:STAR_power}, the power consumption of each element of the independent phase-shift STARS and the coupled phase-shift STARS is $3.63$ mW and $2.64$ mW, respectively. 

\begin{figure}[t!]
    \centering
    \includegraphics[width=0.45\textwidth]{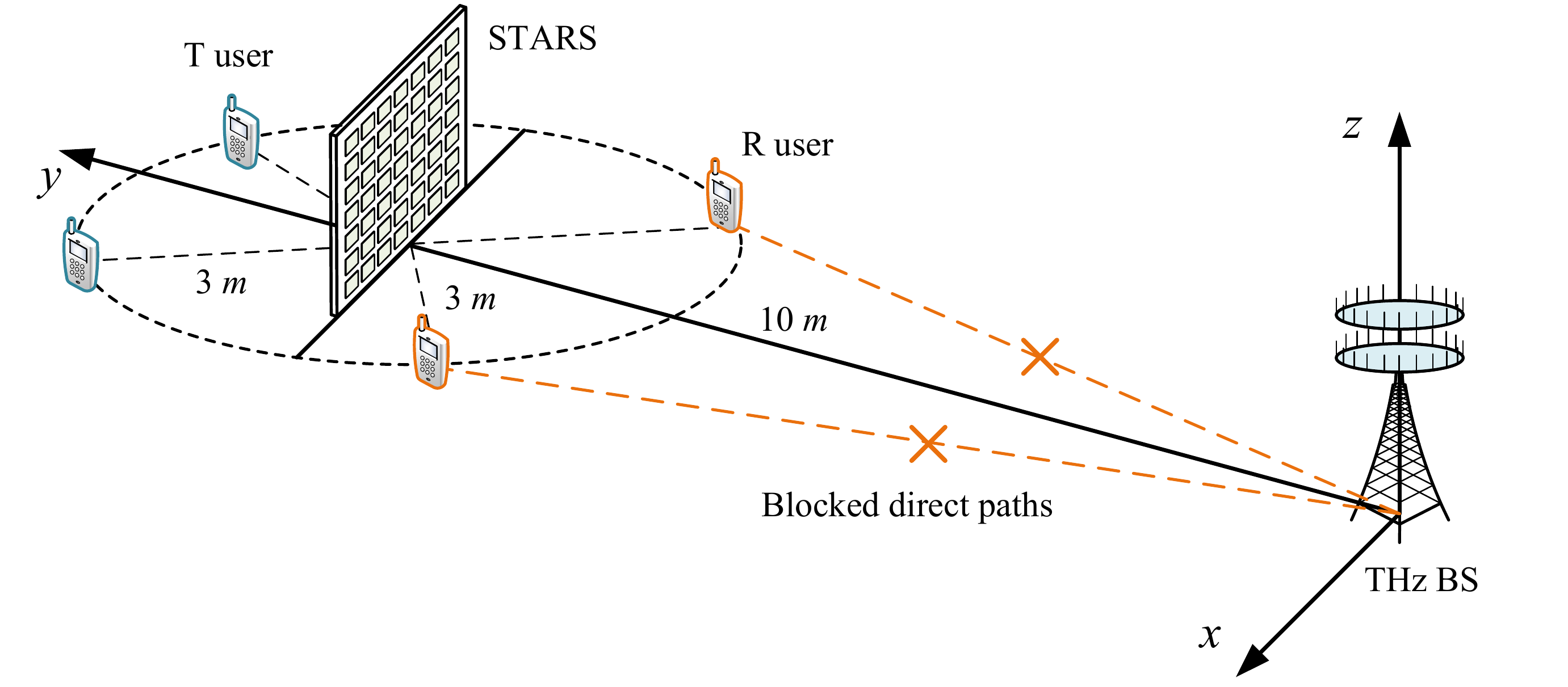}
    \caption{The simulation setup.}
    \label{fig:simulation_setup}
\end{figure}

For the proposed algorithms, the convergence thresholds are set to $10^{-3}$. The initial penalty factor of the PDD-based algorithms is set to $\rho = 10^3$ and its reduction factor is set to $c = 0.6$. The convex problems are solved by the CVX toolbox \cite{cvx}. The golden-section search method for solving \eqref{problem:narrow_theta_2} and the quasi-Newton method for solving \eqref{problem:wideband_Tm} are implemented by the MATLAB function $\mathtt{fminbnd}$ and $\mathtt{fminunc}$, respectively. The following simulation results are obtained by averaging over $100$ random channel realizations. In particular “STARS-i” and “STARS-c” represent the independent and couple phase-shift STARSs, respectively, “HB” represents the conventional hybrid beamforming, and “TTD” represents the TTD-based hybrid beamforming.

\subsection{Narrowband System}
We first investigate the performance of a narrowband STARS-aided THz communication system that operates at a frequency of $0.1$ THz and has a bandwidth of $100$ MHz \cite{su2023wideband}. For performance comparison, we consider the following benchmark schemes:

\begin{itemize}
    \item \textbf{Full-digital (FD) beamforming}: In this scheme, each antenna at the BS is linked to an RF chain in this scheme, necessitating $N$ RF chains. The transmit signal then becomes $\mathbf{x}_{\mathrm{FD}}[n] = \mathbf{F}_{\text{FD}} \mathbf{s}[n]$, where $\mathbf{F}_{\text{FD}} \in \mathbb{C}^{N \times K}$ represents the unconstrained FD beamformer.
    \item \textbf{Conventional RIS}: This scheme employs two $M/2$-element RISs, one for reflection and one for transmission. These RISs are placed next to each other at the same location as the STARS. The power consumption of the conventional RIS can be calculated similarly to the STARS. But the PIN diodes are only required to control the phase shifts. When the maximum tolerable error of phase shifts is $1^\circ$, the power consumption of each element is $1.32$ mW. 
\end{itemize}

\subsubsection{Spectral efficiency versus $P_t$}

In Fig. \ref{fig:narrow_SE_power}, we investigate the achieved SE under different maximum transmit power $P_t$ using different schemes when $w=0$ (SE maximization) and $w=1$ (EE maximization). We set $M = 6 \times 6$. It can be observed for both SE and EE maximization, the obtained SE increases with $P_t$ when $P_t \le 35$ dBm. However, as the transmit power $P_t$ increases beyond $35$ dBm, the SE obtained through EE maximization plateaus, while the SE obtained through SE maximization continues to increase. The reason for this phenomenon is that, at high values of $P_t$, only a fraction of the available power is utilized for maximizing EE, whereas maximizing SE always aims to utilize all available transmit power. Furthermore, although utilizing fewer RF chains, the hybrid beamforming achieves a comparable performance to the fully-digital beamforming. Finally, the STARS approach achieves a substantial performance gain over conventional RIS due to its utilization of all elements for both transmission and reflection. As a result, it can generate more precise directional beams towards users, leading to higher array gain and superior inter-user interference mitigation compared to conventional RIS.

\begin{figure}[t!]
    \centering
    \includegraphics[width=0.4\textwidth]{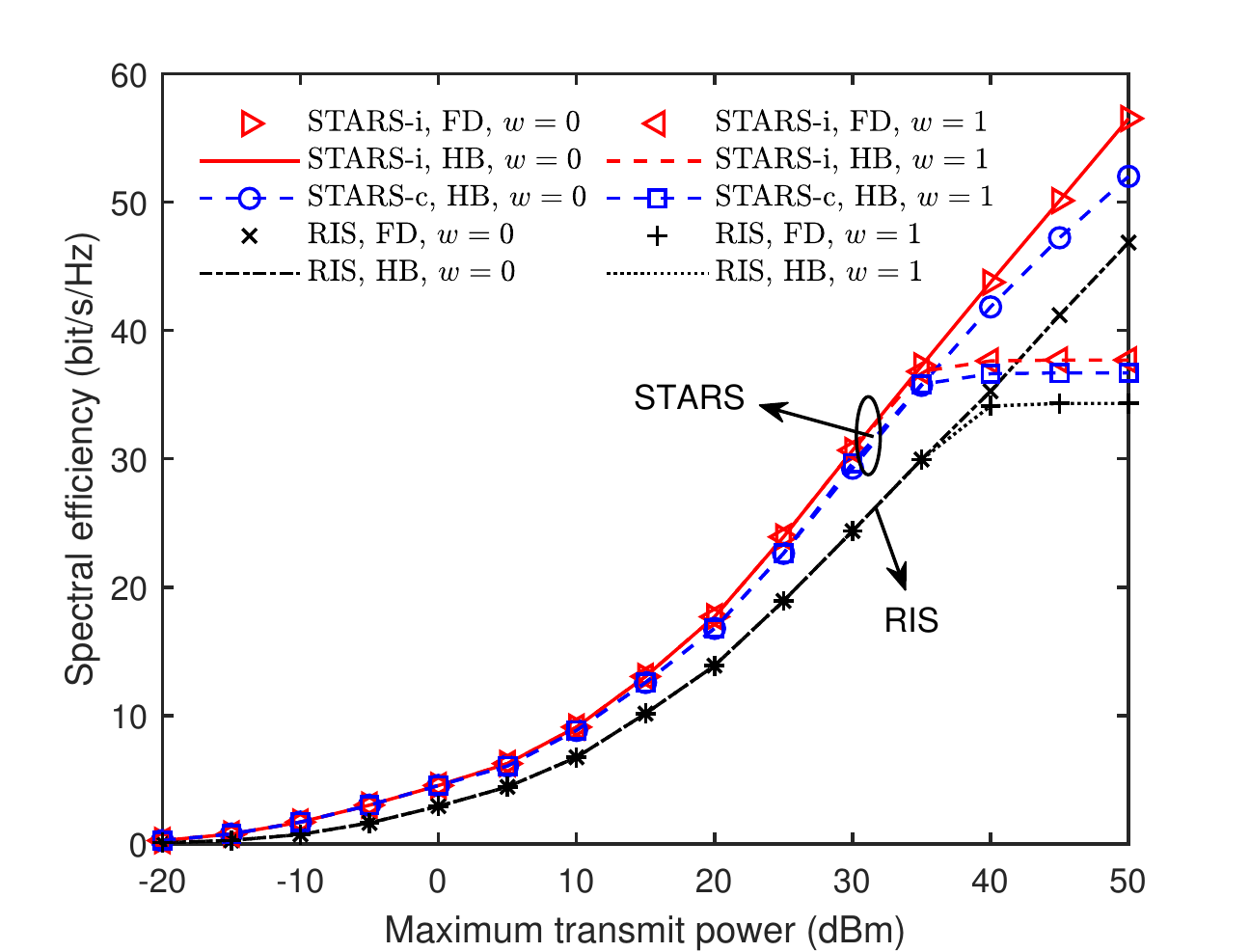}
    \caption{Spectral efficiency versus maximum transmit power $P_t$ for $M = 6 \times 6$ in the narrowband system.}
    \label{fig:narrow_SE_power}
\end{figure}
\begin{figure}[t!]
    \centering
    \includegraphics[width=0.4\textwidth]{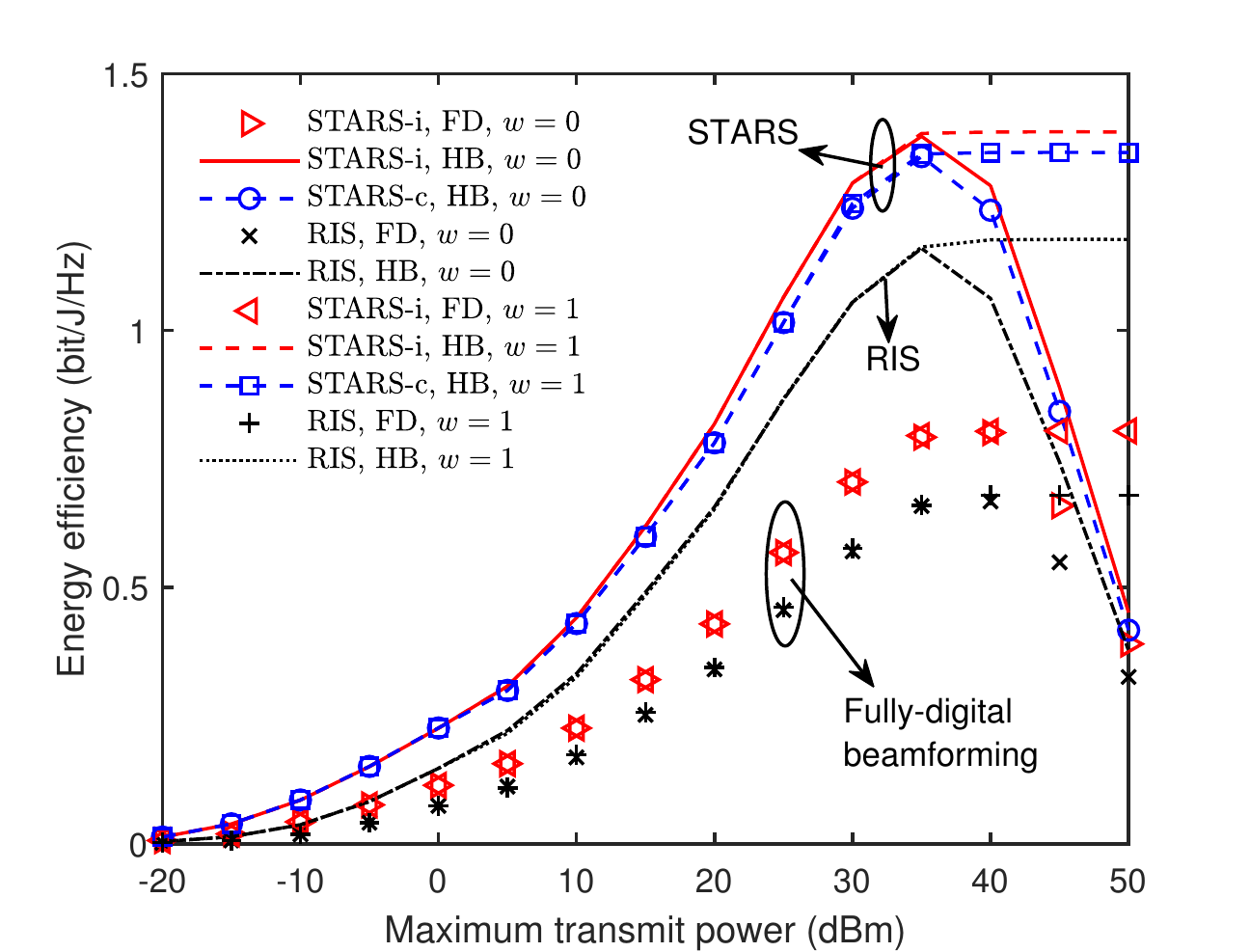}
    \caption{Energy efficiency versus maximum transmit power $P_t$ for $M = 6 \times 6$ in the narrowband system.}
    \label{fig:narrow_EE_power}
\end{figure}

\subsubsection{Energy efficiency versus $P_t$}
In Fig. \ref{fig:narrow_EE_power}, we study the achieved EE under different maximum transmit power $P_t$ using different schemes when $w=0$ (SE maximization) and $w=1$ (EE maximization). We set $M=6 \times 6$. For the results obtained by the EE maximization, it can be observed that the EE first increases with $P_t$ and finally becomes constant. This is because the EE is not a monotonically increasing function of the transmit power $P_t$ but has a finite upper bound. In contrast, SE maximization achieves the same EE performance in the low-power region, but results in significant drops in EE in the high-power region due to the different power utilization strategies of SE and EE maximization. Furthermore, hybrid beamforming consistently outperforms fully-digital beamforming. This is because, in the hybrid beamforming structure, a large number of RF chains with a large power consumption are replaced with a large number of PSs with much lower power consumption. Although these PSs cannot adjust the amplitude of signals as the RF chains, hybrid beamforming can still achieve comparable SE as fully-digital beamforming, but uses much less power. Finally, the superiority of both independent and coupled phase-shift STARS over conventional RIS can be observed.

\begin{figure}[t!]
    \centering
    \includegraphics[width=0.4\textwidth]{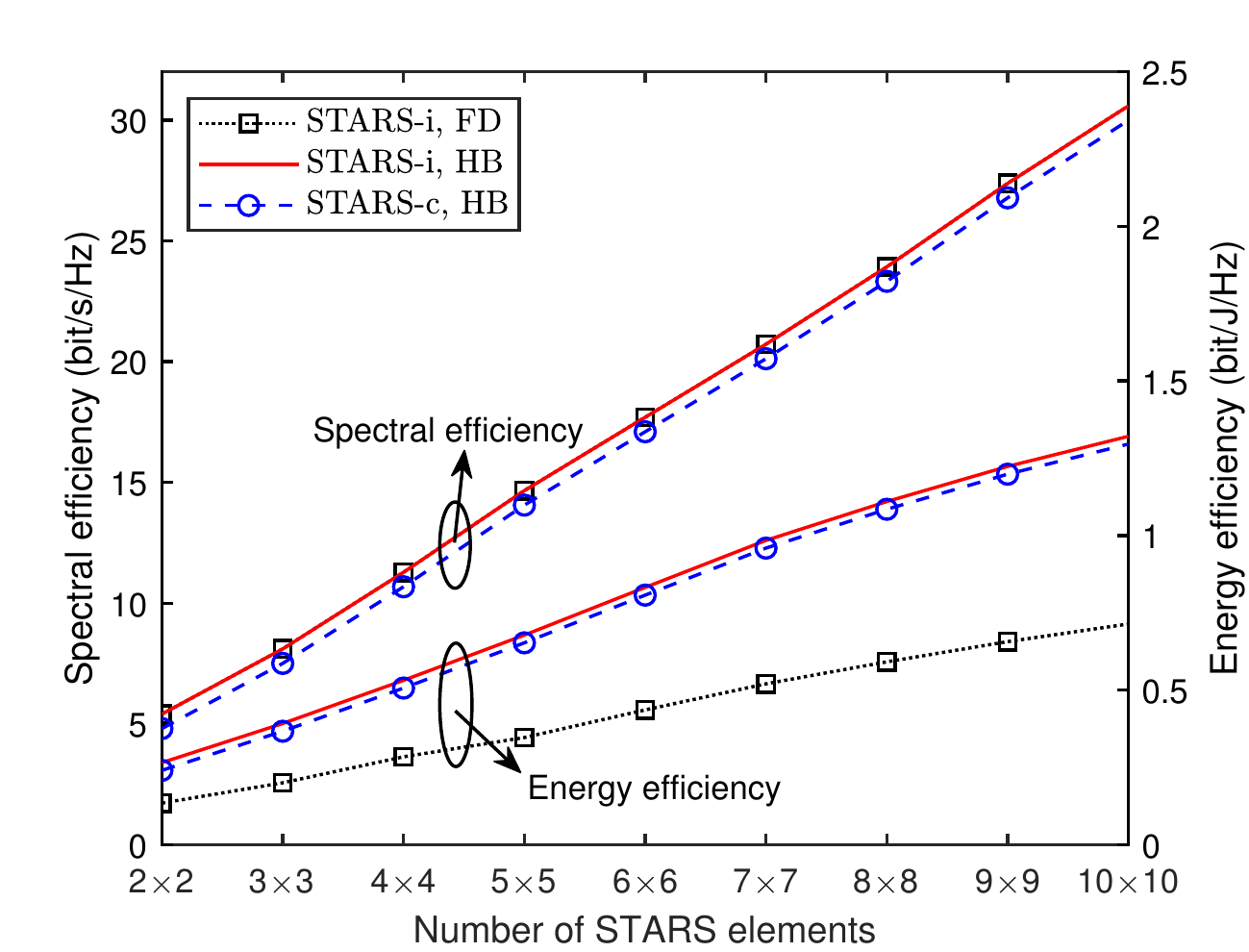}
    \caption{Performance versus the number of STARS elements $M$ for $P_t = 20$ dBm in the narrowband system.}
    \label{fig:narrow_EE_element}
\end{figure}

\subsubsection{Impact of $M$}

In Fig. \ref{fig:narrow_EE_element}, we study the impact of the number of STARS elements on the maximum SE and EE. We set $P_t = 20$ dBm. As can be observed, both SE and EE monotonically increase with the number of RIS elements. This is because the larger number of STARS elements provides more degrees of freedom to achieve the higher SE but the power consumption introduced by the additional elements is relatively low. 

\subsection{Wideband System}
We continue to investigate a wideband OFDM system that operates at a frequency of $f_c = 0.1$ THz and has a bandwidth of $10$ GHz \cite{su2023wideband}. The number of TTDs for each RF chain is set to $N_{\mathrm{T}} = 8$. The number of subcarriers is set to $M_c = 10$. The length of CP is set to $L_{\mathrm{CP}} = 4$. Apart from the FD beamforming and conventional RIS, the following benchmark scheme is also considered:
\begin{itemize}
    \item \textbf{Conventional hybrid beamforming}: In this scheme, the analog beamforming is achieved by exploiting only PSs, which is totally \emph{frequency-independent}. The corresponding transmit signal at subcarrier $m$ is given by $\tilde{\mathbf{x}}_m = \tilde{\mathbf{F}}_{\mathrm{RF}} \mathbf{F}_m^{\mathrm{BB}} \tilde{\mathbf{s}}_m$, where $\tilde{\mathbf{F}}_{\mathrm{RF}} \in \mathbb{C}^{N \times N_{\mathrm{RF}}}$ represents the \emph{frequency-independent} analog beamformer subject to unit-modulus constraints.
\end{itemize}   

\subsubsection{Spectral efficiency versus $P_t$}
In Fig. \ref{fig:wide_SE_power}, we plot the SE versus the maximum transmit power $P_t$ achieved by different schemes when $w=0$ (SE maximization) and $w=1$ (EE maximization). We set $M = 6 \times 6$. As we can see, there is a tradeoff between SE maximization and EE maximization in the high-power region, which is similar to the narrowband system. Furthermore, TTD-based hybrid beamforming achieves a comparable performance to fully-digital beamforming, while conventional hybrid beamforming causes significant performance loss in terms of SE. This is because by exploiting TTD-based hybrid beamforming, the impact of beam split at the BS and STARS can be efficiently reduced through the \emph{frequency-dependent} analog beamforming realized by TTDs. Finally, both independent and coupled phase-shift STARSs outperform conventional RIS in the wideband system.

\begin{figure}[t!]
    \centering
    \includegraphics[width=0.4\textwidth]{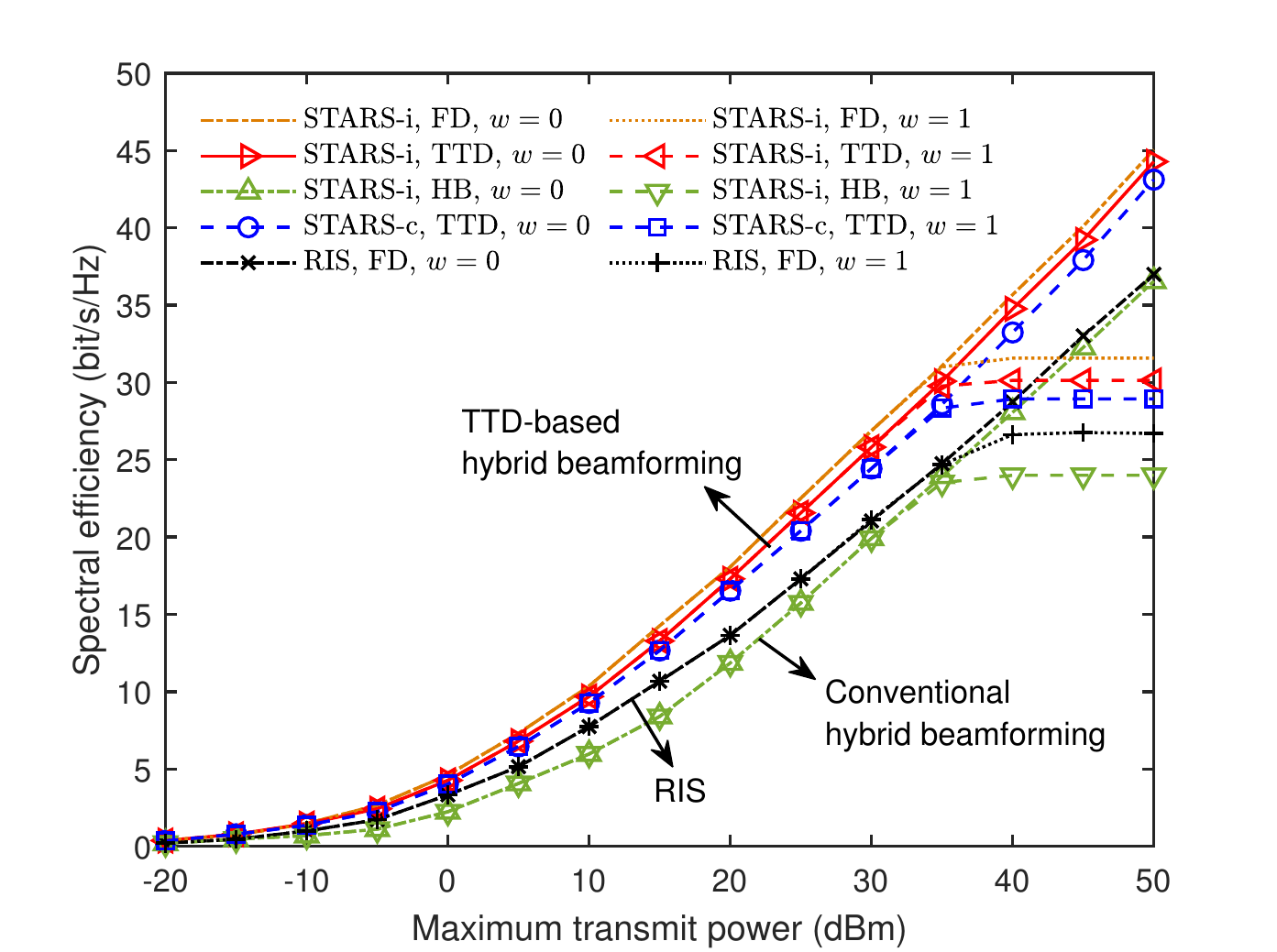}
    \caption{Spectral efficiency versus maximum transmit power $P_t$ for $M = 6 \times 6$ in the wideband system.}
    \label{fig:wide_SE_power}
\end{figure}
\begin{figure}[t!]
    \centering
    \includegraphics[width=0.4\textwidth]{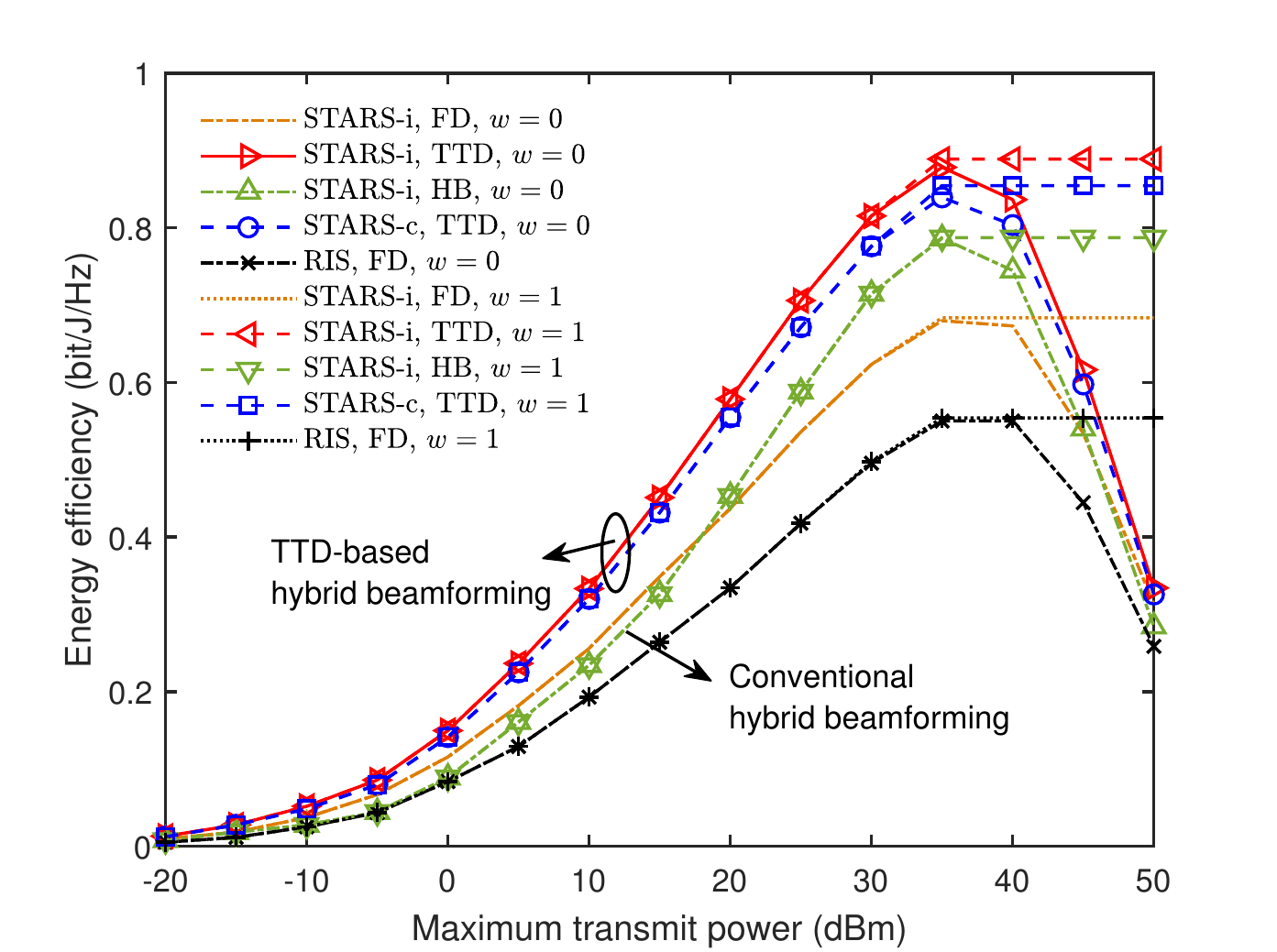}
    \caption{Energy efficiency versus maximum transmit power $P_t$ for $M = 6 \times 6$ in the wideband system.}
    \label{fig:wide_EE_power}
\end{figure}
\begin{figure}[t!]
    \centering
    \includegraphics[width=0.4\textwidth]{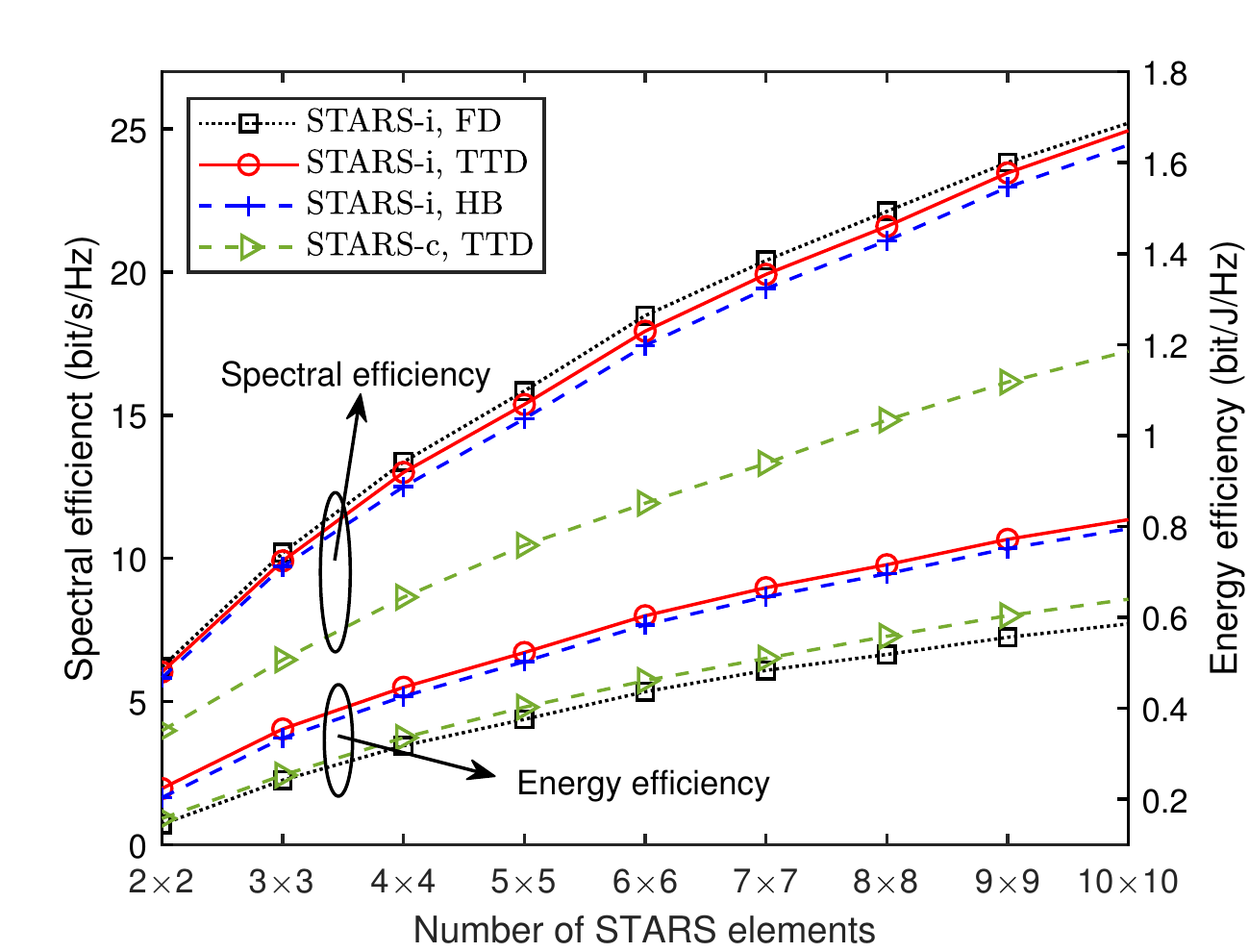}
    \caption{Performance versus the number of STARS elements for $P_t = 20$ dBm in the wideband system.}
    \label{fig:wide_EE_element}
\end{figure}

\subsubsection{Energy efficiency versus $P_t$}
In Fig. \ref{fig:wide_EE_power}, we further investigate the EE versus the maximum transmit power $P_t$ achieved by different schemes when $w=0$ (SE maximization) and $w=1$ (EE maximization). We set $M = 6 \times 6$. It can be observed that the proposed schemes that exploit STARS and TTD-based hybrid beamforming achieve the best performance in terms of EE. In particular, compared with conventional hybrid beamforming, although TTD-based hybrid beamforming involves additional TTDs that require much higher power consumption than PSs, it can significantly enhance SE and thus can realize higher EE.

\subsubsection{Impact of $M$}
In Fig. \ref{fig:wide_EE_element}, we illustrate the maximum SE and EE versus the number of STARS elements. We set $P_t = 20$ dBm. In wideband systems, the maximum SE and EE also increase with the number of STARS elements. Furthermore, it can also be observed that the performance gap between TTD-based hybrid beamforming and conventional hybrid beamforming becomes larger with the exploitation of more STARS elements. This is due to the larger array at the STARS which exacerbates the impact of beam split, further reducing the performance of conventional hybrid beamforming.
\section{Conclusion} \label{sec:conclusion}

A STARS-aided THz communication system was proposed. Considering both independent and coupled phase shifts, the power consumption models were proposed for STARSs. The general SE and EE optimization problems were formulated for both narrowband and wideband systems to jointly design the hybrid beamforming at the BS and the passive beamforming at the STARS. In particular, in wideband systems, the TTDs were introduced into the conventional hybrid beamforming structure to mitigate the wideband beam split. The numerical results confirmed the effectiveness of exploiting the STARS in THz communication systems. The great potential revealed in this work practical implementation of STARSs that can operate at THz frequencies, which could be an appealing future research direction.

\section*{Appendix~A\\Proof of Lemma \ref{lemma_2}} \label{appendiex:lemma_2}
First, we transform the objective function of problem \eqref{problem:narrow_subproblem_2} into the following more tractable form:
\begin{align}
   &\sum_{\chi \in \{t,r\}} \sum_{k \in \mathcal{K}_{\chi}} \| \mathbf{p}_k - \boldsymbol{\theta}_{\chi}^T \mathbf{H}_k^n \mathbf{F} + \rho \boldsymbol{\lambda}_k\|^2 \nonumber \\
    = & \sum_{\chi \in \{t,r\}} \underbrace{\boldsymbol{\theta}_{\chi}^H \mathbf{\Phi}_{\chi} \boldsymbol{\theta}_{\chi} - 2 \mathrm{Re}\{ \boldsymbol{\theta}_{\chi}^H \boldsymbol{\upsilon}_{\chi} \}}_{= g(\boldsymbol{\theta}_{\chi})} + C,
\end{align}
where 
\begin{align}
    \mathbf{\Phi}_{\chi} = &\sum_{k \in \mathcal{K}_{\chi}} (\mathbf{H}_k^n)^* \mathbf{F}^* \mathbf{F}^T (\mathbf{H}_k^n)^T, \\
    \boldsymbol{\upsilon}_{\chi} = &\sum_{k \in \mathcal{K}_{\chi}} (\mathbf{H}_k^n)^* \mathbf{F}^* (\mathbf{p}_k^T + \rho \boldsymbol{\lambda}_k^T),
\end{align}
and $C$ is the constant value irrelevant to $\boldsymbol{\theta}_{\chi}$. 
It is easily known that $g(\boldsymbol{\theta}_{\chi})$ is a quadratic function of each entry of $\boldsymbol{\theta}_{\chi}$. Let $\vartheta_{\chi, m}$ denote the $m$-th entry of $\boldsymbol{\theta}_{\chi}$. Then, the function $g(\boldsymbol{\theta}_{\chi})$ with respect to $\vartheta_{\chi, m}$ can be expressed in the following form:
\begin{equation} \label{new_obj_theta}
    \tilde{g}(\vartheta_{\chi, m}) = c_{\chi,m} |\vartheta_{\chi, m}|^2 -  2 \mathrm{Re} \{ d_{\chi,m}^* \vartheta_{\chi, m} \},
\end{equation}
where $c_{\chi,m}$ and $d_{\chi,m}$ are some real number and some complex number, respectively. The exact values of $c_{\chi,m}$ and $d_{\chi,m}$ can be obtained as follows. First, the derivative of $g(\boldsymbol{\theta}_{\chi})$ with respect to $\boldsymbol{\theta}_{\chi}$ is given by 
\begin{equation}
    \frac{\partial g(\boldsymbol{\theta}_{\chi})}{\partial \boldsymbol{\theta}_{\chi}^*} = \mathbf{\Phi}_{\chi} \boldsymbol{\theta}_{\chi} - \boldsymbol{\upsilon}_{\chi} .
\end{equation} 
The derivative of $\tilde{g}(\vartheta_{\chi, m})$ with respect to $\vartheta_{\chi, m}$ is given by 
\begin{equation}
    \frac{\partial \tilde{g}(\vartheta_{\chi, m})}{\partial \vartheta_{\chi, m}^*} =  c_{\chi,m} \vartheta_{\chi, m} - d_{\chi,m}.
\end{equation}
By comparing the above two derivatives, it can be obtained that $[\mathbf{\Phi}_{\chi} \boldsymbol{\theta}_{\chi} - \boldsymbol{\upsilon}_{\chi}]_{m} = c_{\chi,m} \vartheta_{\chi, m} - d_{\chi,m}$. Therefore, we have
\begin{subequations} \label{aux_theta}
    \begin{align}
        c_{\chi,m} = &[\mathbf{\Phi}_{\chi}]_{m,m}, \\
        d_{\chi,m} = &[\mathbf{\Phi}_{\chi}]_{m,m} [\boldsymbol{\theta}_{\chi}]_m  - [\mathbf{\Phi}_{\chi} \boldsymbol{\theta}_{\chi}]_m  + [\boldsymbol{\upsilon}_{\chi}]_{m}.
    \end{align}   
\end{subequations} 
Finally, by substituting $\vartheta_{\chi, m} = \beta_{\chi, m} e^{j \phi_{\chi, m} }$, where $\beta_{\chi,m} \in [0,1]$ and $\phi_{\chi,m} \in [0, 2\pi]$, into \eqref{new_obj_theta}, the objective function in \eqref{problem:narrow_subproblem_2} can be obtained. The proof is thus completed.

\bibliographystyle{IEEEtran}
\bibliography{reference/mybib}

\end{document}